\newcommand{\sv}[1]{}
\theoremstyle{plain}
\newcommand{\defparproblem}[4]{
  \vspace{3mm}
\noindent\fbox{
  \begin{minipage}{.95\textwidth}
  \begin{tabular*}{\textwidth}{@{\extracolsep{\fill}}lr} \textsc{#1}\\ \end{tabular*}
  {\bf{Input:}} #2  \\
  {\bf{Parameter:}} #3 \\
  {\bf{Question:}} #4
  \end{minipage}
  }
  \vspace{2mm}
}
\newcommand{\defproblem}[3]{
  \vspace{3mm}
\noindent\fbox{
  \begin{minipage}{.95\textwidth}
  \begin{tabular*}{\textwidth}{@{\extracolsep{\fill}}lr} #1  \\ \end{tabular*}
  {\bf{Input:}} #2  \\
  {\bf{Question:}} #3
  \end{minipage}
  }
  \vspace{2mm}
  }
\DeclarePairedDelimiter\ev{\langle}{\rangle}
\DeclarePairedDelimiter\ceil{\lceil}{\rceil}
\newcommand{\AAA}{{\mathcal A}}
\newcommand{\BB}{{\mathcal B}}
\newcommand{\cD}{\mathcal{D}}
\newcommand{\cO}{{\mathcal O}}
\newcommand{\cI}{{\mathcal I}}
\newcommand{\cX}{{\mathcal X}}
\newcommand{\cF}{{\mathcal F}}
\newcommand{\bbF}{{\mathbb F}}
\newcommand{\RBDS}{{\sc Red Blue Dominating Set}}
\newcommand{\UAQ}{{\sc User Authorization Query}}
\newcommand{\mulbiclique}{\sc Multicolored Biclique}
\newtheorem{theorem}{\bf Theorem}
\newtheorem{definition}{\bf Definition}
\newtheorem{proposition}{\bf Proposition}
\newtheorem{reduction rule}{\bf Reduction Rule}
\newtheorem{branching rule}{\bf Branching Rule}
\newtheorem{lemma}{\bf Lemma}
\newtheorem{construction}{Construction}
\newenvironment{reduction procedure}[1]
{\innercustomthm}
  {\endinnercustomthm}
\newenvironment{reduction}[1]
{\newinnercustomthm}
  {\endnewinnercustomthm}
\title{Towards Better Understanding of User Authorization Query Problem via Multi-variable Complexity Analysis}
\date{}
\author{Jason Crampton, Gregory Gutin, and Diptapriyo Majumdar}
\affil{Royal Holloway, University of London, Egham, United Kingdom}
\begin{document}

\maketitle

\begin{abstract}
User authorization queries in the context of role-based access control have attracted considerable interest in the last 15 years.
Such queries are used to determine whether it is possible to allocate a set of roles to a user that enables the user to complete a task, in the sense that all the permissions required to complete the task are assigned to the roles in that set.
Answering such a query, in general, must take into account a number of factors, including, but not limited to, the roles to which the user is assigned and constraints on the sets of roles that can be activated.
Answering such a query is known to be NP-hard.
The presence of multiple parameters and the need to find efficient and exact solutions to the problem suggest that a multi-variate approach will enable us to better understand the complexity of the user authorization query problem (UAQ).

In this paper, we establish a number of complexity results for UAQ.
Specifically, we show the problem remains hard even when quite restrictive conditions are imposed on the structure of the problem.
Our FPT results show that we have to use either a parameter with potentially quite large values or quite a restricted version of UAQ. Moreover, our second FPT algorithm is complex and requires sophisticated, state-of-the-art techniques.
In short, our results show that it is unlikely that all variants of UAQ that arise in practice can be solved reasonably quickly in general.
\end{abstract}




\maketitle

\section{Introduction}
\label{sec:uaq}

In Role-Based Access Control (RBAC), permissions are not assigned to users directly.
A user is assigned roles and roles are assigned permissions.
Thus, a user $u$ is authorized for those permissions that are assigned to at least one role to which $u$ is assigned.

A user interacts with an RBAC system by activating some subset of the roles to which she is assigned.
In certain situations, it is useful to be able to identify the particular subset of roles a user needs to activate in order to complete a task that requires specific permissions.
A user authorization query seeks to find a set of roles that is suitable for a given set of permissions.
A substantial body of work has established that user authorization queries are, in general, hard to solve~\cite{DuJo06,ChCr09}.

Generally speaking, it is important from the end-user perspective to answer access control queries as quickly as possible.
Moreover, it is important that the answers to those queries are correct, otherwise a user may not be able to complete a task (because too few permissions are assigned to the solution's role set) or security breaches may occur (in the case of too many permissions).
Thus, it is desirable to find algorithms to solve user authorization queries that are exact and as fast as possible.
Existing {\sf NP}-hardness results  \cite{ChCr09,DuJo06} show that such algorithms are unlikely to exist for all user authorization queries.
However, such queries have several  {numerical} parameters, some of which may be small in all instances of practical interest.
Hence, it is worth exploring such queries from the perspective of fixed-parameter tractability (a short introduction to parameterized algorithms and complexity is given in Section \ref{sec:prel}). 

As for other practical problems, SAT and other solvers for  \UAQ\ (UAQ) have been tested and compared with other solvers using benchmarks.
Recently, Armando et al. \cite{ArmandoGT20} suggested a methodology to evaluate existing benchmarks for UAQ and to guide
the design of new ones. The methodology is based on the use of fixed-parameter tractable ({\sf FPT}) algorithms, 
{i.e.,  algorithms which run in time $f(k)N^{c},$ where $N$ is the size of the problem instance, $k$ is a parameter or sum of several parameters, $f$ is a function depending only on $k$, and $c$ is a constant.
Note that when $k$ is fixed, then the algorithm's running time becomes polynomial of constant degree. Often $c$ is small, but for an {\sf FPT} algorithm to be practical, $k$ should be relatively small and $f$ relatively slow growing.} 
Armando et al. observed that ``the proposed methodology 
will yield different, improved results as soon as new complexity results will become available. Thus the benchmarks proposed and used in this paper could be improved consequently.'' 
In this paper, we prove some parameterized tractability results for restricted versions of UAQ that can be applied in that methodology. 

On the other hand, we show that queries for some quite constrained RBAC configurations remain hard, even from the parameterized complexity point of view, {i.e., it is highly unlikely that there are {\sf FPT} algorithms for them.}
Our parameterized intractability results indicate that to obtain 
{\sf FPT} algorithms for UAQ with small parameters we have to consider quite restricted versions of UAQ.

Two well-known access control problems -- UAQ and the Workflow Satisfiability Problem (WSP) -- are ``equally" intractable from the classical complexity point of view (i.e., {\sf NP}-complete). However, from a more fine-grained parameterized complexity point of view, the time complexities of the two problems, even in their most basic forms, are quite different. 
While Wang and Li \cite{WaLi10} and subsequent research (e.g.,  \cite{CoCrGaGuJo14,CrGaGuJoWa16,GutinK20}) showed that basic and more advanced versions of WSP are {\sf FPT}, our hardness results in this paper clearly demonstrate that several parameterized basic versions of UAQ are still intractable. Moreover, we show in Section \ref{sec:w-hardness-k-ij} that even a quite restrictive version of UAQ can require advanced algorithmic methods 
to design an {\sf FPT} algorithm, which was not the case with WSP. 
Thus, it is likely that the search for practical algorithms to solve UAQ will be less successful than it was for WSP (see, e.g.,~\cite{BertolissiSR18,CohenCGGJ16,KarapetyanPGG19}).

\subsection{User Authorization Query Problem and Its Reduction}
 

An RBAC \emph{configuration} has the form $\rho = ( (R,\leq), P, RP )$, where $(R,\leq)$ is a \emph{role hierarchy} modeled as a partially ordered set of roles, and $RP \subseteq R \times P$ is an assignment of roles to permissions.
A role $r \in R$ is \emph{authorized} for the set of permissions $P(r) = \{p \in P \mid (p,r') \in RP, r' \leq r\}.$
Then a user $u$, authorized for a set of roles $R' \subseteq R$, is authorized for the set of permissions $P(R') = \bigcup_{r \in R'} P(r)$.
The {\em role-permission} graph (RPG) is the bipartite graph $(R \cup P, RP)$ with partite sets $R$ and $P$ and edge set $RP$. (It is assumed that $R$ and $P$ are disjoint.)
A (dynamic) \emph{separation of duty constraint} has the form $\langle X, t \rangle$, where $1 \leq t \leq |X|$ for $X \subseteq R$.
The semantics of an SoD constraint is that no user can activate $t$ or more roles from $X$ (thus restricting the sets of roles that may provide a solution to the user authorization query). 
We can now formulate a decision version of the problem studied in this paper. 

\defproblem{{\UAQ}}%
            {An RBAC policy $((R,\leq),P,RP)$, $P_{lb}, P_{ub} \subseteq P$, a set of constraints $\cD$, and integers $k_r$, $k_p.$}
            {Does there exist a {\em solution}, which is set of roles $R_s \subseteq R$ such that $|R_s| \leqslant k_r$, $R_s$ satisfies all constraints in $\cD$, $P_{lb} \subseteq P(R_s) \subseteq P_{ub}$ and $|P(R_s) \setminus P_{lb}| \leqslant k_p$? If the answer is yes, find a solution.}
            
Note that there are search versions of the problem \cite{WiQaLi09,ArmandoGT20}. 
Wickramaarachchi et al.~\cite{WiQaLi09}, for example, do not specify $k_p$ and $k_r$.
Instead, a problem instance includes an objective, which takes one of two values -- $\it min$ or $\it max$ -- and specifies the required nature of a solution set.
Specifically, given $((R,\leq),P,RP)$, $P_{lb}$, $P_{ub}$, $\cD$ and $\it min$ (respectively, $\it max$), find $R' \subseteq R$ such that 
\begin{enumerate}
 \item $P_{lb} \subseteq P(r) \subseteq P_{ub}$, 
 \item $R'$ satisfies all constraints in $\cD$, and
 \item for any $R''$ that satisfies conditions 1 and 2, we have $|R'| \leq |R''|$ (respectively, $|R'| \geq |R''|$).
\end{enumerate}

            
We say that {\UAQ} (UAQ) is {\em non-hierarchical} if no partial order is defined on the set of roles. 
Moffet~\cite{Moffet98} has argued that role hierarchies are not necessarily appropriate structures for access control (see also \cite{MoffetL99}).
Moreover, it is always possible to eliminate a role hierarchy by assigning the (inherited) permissions of junior roles explicitly to more senior roles.
Hereafter, we will consider only non-hierarchical UAQs.

We will use the following reduction from the UAQ instance to an equivalent instance, i.e., both instances are either yes-instances or no-instances. 

\vspace{2mm}

\begin{reduction}{0}
\label{rule:initial-assumption}
(i) delete any role $r$ such that $P(r) \cap P_{lb} = \emptyset $ (since $r$ cannot contribute to a set of roles that satisfies the instance);\\
(ii) delete any role $r$ such that $P(r) \setminus P_{ub} \ne \emptyset$ (since $r$ contributes permissions that are not allowed);\\
(iii) for any deleted role $r$  and any  SoD constraint $\langle X,t\rangle$ with $r\in X$, replace the constraint by $\langle X\setminus \{r\},t \rangle$;\\
(iv) for any deleted role $r$, remove all pairs of the form $(r,p)$ from $RP$;\\
(v) delete any constraint $\ev{X, t}$ in which $|X| < t$.
\end{reduction}
	
 	


One consequence of applying the above reduction rule is that we may assume without loss of generality that $P_{ub} = P$. 	
Thus, in this paper a UAQ instance will be written as a tuple $(R,P,RP,P_{lb},\cD,k_r,k_p)$.

\subsection{Our results}

In Section \ref{sec:hard}, we present parameterized intractability results for UAQ when there are no SoD constraints. 
{These results are based on a hierarchy of parameterized intractability classes, which we describe in more detail in Section 2. Informally, parameterized problems which admit {\sf FPT} algorithms form the tractable class {\sf FPT}. There is an infinite number of parameterized intractability classes {\sf W}[i], $i \ge 1$, such that {\sf FPT}$\subseteq${\sf W}[1]$\subseteq${\sf W}[2]$\subseteq \dots$. A problem belongs to {\sf W}[i] if it can be reduced to one of the hardest problems in {\sf W}[i]. It is widely believed that {\sf FPT}$\ne${\sf W}[1] (and hence {\sf FPT}$\ne${\sf W}[i] for any $i$). In particular, a parameterized problem proved to be {\sf W}[i]-hard is highly unlikely to admit an {\sf FPT} algorithm.}

We prove that if $P_{\ell b}  = P$ then UAQ parameterized by $k_r$ is {\sf W}[2]-hard and if 
$k_r = |R|$ (i.e., there is no restriction on the size of UAQ solution) then UAQ parameterized by $k_p$ is {\sf W}[2]-hard.
Note that these results strengthen the classical {\sf NP}-hardness results of Du and Joshi \cite{DuJo06} (for $\cD=\emptyset$ and $P_{\ell b}  = P$) and of Chen and Crampton \cite{ChCr09} (for $\cD=\emptyset$ and $k_r = |R|$).  
In the same section, we also prove that UAQ parameterized by $|P_{\ell b}| + k_p$ is {\sf W}[1]-hard even under the following restrictions:
(i)~every role is authorized for at most three permissions from $P$ and at most two permissions from $P \setminus P_{\ell b}$, and
(ii)~every two roles are authorized for at most two common permissions. 
This result shows that we have parameterized intractability even for UAQ without SoD constraints, with quite a restrictive structure imposed on RPG, and parameterized by a parameter which can be quite large due to $|P_{\ell b}|$ (see, e.g., Table 1 in~\cite{ArmandoGT20}).

Let $R_2$ be the set of roles that are authorized for at least one permission outside $P_{\ell b}$ and let $\hat{r}=|R_2|.$
In Section~\ref{sec:hard} we also prove that UAQ with $\cD=\emptyset$ and $k_r = |R|$ admits an algorithm of running time $\cO^*(2^{\hat{r}})$.
Note that this algorithm is asymptotically faster than an algorithm of runtime $\cO^*(2^{|R|})$ introduced in~\cite{ZhangJ08} and studied in~\cite{WiQaLi09,MoTr12}, so it can be used for producing new UAQ benchmarks with $\cD=\emptyset$ and $k_r = |R|$.

Let ${\hat k}=|P \setminus P_{\ell b}|.$
In Section \ref{sec:w-hardness-k-ij}, we study UAQ such that $|P(r) \cap P(r')| \leqslant 1$ for all $r\neq r' \in R$. We prove that this form of UAQ, parameterized by $k_r + {\hat k}$, is {\sf W}[1]-hard even when every SoD constraint $\ev{X,t}$ in $\cD$ has $|X| = t = 2$. Note that in this result the parameter can be significantly larger than in the one above for $P_{\ell b}  = P$ but we do allow SoD constraints. 
This result can be easily extended to the case when for fixed integers $\alpha\ge 2$ and $\beta\ge 2$, 
no $\alpha$ roles are collectively authorized for $\beta$ permissions.

In Section \ref{sec:k-i-j-free}, 
we study a more restrictive UAQ problem satisfying the following conditions for fixed integers $\alpha\ge 2$ and $\beta\ge 2$:
(i)~for every set of $\alpha$ roles, $\{r_1,\ldots,r_{\alpha}\}$, $|P(r_1) \cap P(r_2) \cap \cdots \cap P(r_{\alpha})| < \beta$;
(ii)~there is a constant $c$ such that for every SoD constraint $\ev{X, t}$ we have $|X|\le c$; and
(iii)~for every pair $\ev{X_1, t_1}, \ev{X_2, t_2}$ of SoD constraints, we have  $X_1 \cap X_2 = \emptyset$.
We prove that this problem parameterized by $k_r + \hat k$ is {\sf FPT} and admits an algorithm of running time $\cO^*(2^{\cO(k_r^{\alpha} + \hat k)})$.
The design of this algorithm incorporates several algorithmic tools: reduction rules, branching rules and advanced dynamic programming which uses representative families on matroids. 
Hence, we believe that it is unlikely that (simple) {\sf FPT} algorithms exist when the parameter of interest is small, even for significantly restricted versions of UAQ, in sharp contrast to WSP, for which efficient {\sf FPT} algorithms exist for most instances of practical interest~\cite{CoCrGaGuJo14,KarapetyanPGG19}.

For ease of reference, we summarize the notation we use in the paper and our results in Tables~\ref{t1} and~\ref{t2}, respectively.
 
\begin{table}[h]
\centering
  \caption{Summary of notation used in the paper}\label{t1}
 \begin{tabular}{ll}
 \toprule
  \bf Notation & \bf Meaning \\
  \midrule
  $R$ & Set of roles \\
  $P$ & Set of permissions \\
  $\it RP \subseteq R \times P$ & Role-permission assignment relation \\
  $P(r) \subseteq P$ & Set of permissions assigned to role $r$ \\
  $\langle X,t \rangle$, $X \subseteq R$, $t \in \mathbb{N}$ & Separation-of-duty constraint \\
  $\mathcal{D}$ & Set of separation-of-duty constraints \\
  \midrule
  $P_{\ell b} \subseteq P$ & Required set of permissions in UAQ solution \\
  $k_r \in \mathbb{N}$ & Maximum number of roles in UAQ solution \\
  $k_p \in \mathbb{N}$ & Maximum number of permissions outside $P_{\ell b}$ in UAQ solution \\
  $\hat{r} \in \mathbb{N}$ & Number of roles assigned to at least one permission not in $P_{\ell b}$ \\
  $\hat{k} \in \mathbb{N}$ & Number of permissions not in $P_{\ell b}$ \\
  \midrule
  $G = (V,E)$ & Graph $G$ with vertex set $V$ and edge set $E$ \\
  $N(v)$, $v \in V$ & Set of neighbors of $v$ in $G = \langle V,E \rangle$ \\
  $N(S)$, $S \subseteq V$ & Set of neighbors of vertices in $S$ \\
  $A \uplus B$ & Union of disjoint sets $A$ and $B$ \\
  $G =(A \uplus B,E)$ & Bipartite graph $G$, $xy \in E$ iff $x \in A$ and $y \in B$ \\
  $K_{\alpha,\beta} = (A \uplus B,E)$ & Complete bipartite graph, $|A| = \alpha$, $|B| = \beta$ \\
  \midrule
  $[k]$ & Set of integers $\{1,\dots,k\}$ \\
  $\mathcal{O}^*(f(k))$ & $\mathcal{O}(f(k)p(k))$, where $p$ is some polynomial \\
  \bottomrule
 \end{tabular}
\end{table}

\vspace*{\baselineskip}

\begin{table}[h]\centering 
 \caption{Summary of our results}\label{t2}
 \begin{tabular}{cccccl}
  \toprule
  \multicolumn{4}{c}{\bf Input restrictions} & \multirow{2}{*}{\bf Parameter} & \multirow{2}{*}{\bf Complexity}  \\
  \cmidrule{1-4}
  $k_r$ & $k_p$ & RPG & $\mathcal{D}$ & & \\
  \midrule
  $|R|$ & -- & -- & $\emptyset$ & $k_p$ & W[2]-hard \\
  \cmidrule{1-4}
  -- & $0$ & -- & $\emptyset$ & $k_r$ & W[2]-hard \\
  \cmidrule{1-4}
  $|R|$ & -- & -- & $\emptyset$ & $\hat{r}$ & FPT \\
  \cmidrule{1-4}
  -- & -- & \begin{tabular}{@{}c@{}}$|P(r)| \le 3$\\ $1 \le |P(r) \cap P_{\ell b}| \le 3$ \\ $|P(r) \cap P(r')| \le 2$\end{tabular} & $\emptyset$ & $|P_{\ell b}| + k_p$ & W[1]-hard \\
  \cmidrule{1-4}
  -- & -- & $K_{2,2}$-free & \begin{tabular}{@{}c@{}}$\{\langle X_1,2\rangle,\dots,\langle X_m,2\rangle\}$\\ $|X_i|=2$ \end{tabular} & $k_r + \hat{k}$ & W[1]-hard \\
  \cmidrule{1-4}
  -- & -- & $K_{\alpha,\beta}$-free & \begin{tabular}{@{}c@{}} $\{\langle X_1,t_1\rangle,\dots,\langle X_m,t_m\rangle\}$ \\ $|X_i| \le c$ \\ $X_i \cap X_j = \emptyset$ \end{tabular} & $k_r + \hat{k}$ & FPT \\
  \bottomrule
 \end{tabular}
\end{table}

\section{Preliminaries}\label{sec:prel}

\paragraph{Terminology and Notation.}
For a graph $G = \langle V(G), E(G) \rangle$ and vertex $x\in V(G)$, $N_G(x) = \{y \in V(G) \mid xy \in E(G)\}$ is the set of vertices adjacent to $x$ (``neighbors''). 
For a set $S\subseteq V(G)$, $N(S)=\bigcup_{x\in S} N_G(x) \setminus S.$
We will omit the subscript $G$ when the graph is clear from the context.

A graph $G$ is {\em bipartite} if its vertices can be partitioned into two sets $A$ and $B$ such that for all $ab \in E$, $a \in A$ and $b \in B$.
We will generally write a bipartite graph $G$ in the form $G=(A\uplus B,E)$.
A bipartite graph $G=(A\uplus B,E)$ is {\em complete} if $ab\in E$ for every $a\in A$ and $b\in B$.
Such a graph will also be denoted (up to isomorphism) by $K_{\alpha,\beta}$, where $\alpha=|A|$ and $\beta=|B|$.
{A graph is $K_{\alpha,\beta}$-{\em free} if it contains no induced subgraph isomorphic to $K_{\alpha,\beta}$.}

For a positive integer $k$, we write $[k]$ to denote $\{1,2,\dots,k\}$. 


\paragraph{Parameterized Complexity.} An instance of a parameterized problem $\Pi$
is a pair $(I,k)$ where $I$ is the {\em main part} and $k$ is the
{\em parameter}; the latter is usually a non-negative integer.  
A parameterized problem is
{\em fixed-parameter tractable} ({\sf FPT}) if there exists a computable function
$f$ such that instances $(I,k)$ can be solved in time $O(f(k)|{I}|^c)$
where $|I|$ denotes the size of~$I$ and $c$ is an absolute constant. The class of all fixed-parameter
tractable decision problems is called {{\sf FPT}} and algorithms which run in
the time specified above are called {{\sf FPT}} algorithms. As in other literature on {{\sf FPT}} algorithms,
we will often omit the polynomial factor in $\cO(f(k)|{I}|^c)$ and write $\cO^*(f(k))$ instead.

Consider two parameterized problems $\Pi$ and $\Pi'$. We say that $\Pi$ has a {\em parameterized reduction} to $\Pi'$ if there are functions $k\mapsto k'$ and $k\mapsto k''$ from $\mathbb{N}$ to $\mathbb{N}$
and a function $(I,k)\mapsto (I',k')$ such that 

\begin{enumerate}
 \item  $(I,k)\mapsto (I',k')$ is computable in $k''(|I|+k)^{O(1)}$ time, and 
 \item $(I,k)$ is a yes-instance of $\Pi$ if and only if $(I',k')$ is a yes-instance of $\Pi'$.
\end{enumerate}

While {\sf FPT} is a parameterized complexity analog of {\sf P} in classic complexity theory, there are many parameterized hardness classes, forming a nested sequence of which {\sf FPT} is the first member: {\sf FPT}$\subseteq$ {\sf W}[1]$\subseteq$ {\sf W}[2 $]\subseteq \dots$.
It is well known that if the Exponential Time Hypothesis holds then ${\sf FPT} \ne {\sf W}[1]$.\footnote{The Exponential Time Hypothesis is a conjecture that there is no algorithm solving 3-CNF Satisfiability in time $2^{o(n)}$, where $n$ is the number of variables.}
Hence, {\sf W}[1] is generally viewed as a parameterized intractability class, which is an analog of {\sf NP} in classical complexity.  {Consider the following two parameterized problems. In the {\sc Clique} problem parameterized by $k$, given a graph $G$ and a natural number $k$, we are to decide whether $G$ has a complete subgraph on $k$ vertices. In the {\sc Dominating Set} problem parameterized by $k$, given a graph $G=(V,E)$ and a natural number $k$, we are to decide whether $G$ has a set $S$ of vertices such that every vertex in $V\setminus S$ is adjacent to some vertex in $S.$
A parameterized problem $\Pi$ is in {\sf W}[1] ({\sf W}[2], respectively) if it there is parameterized reduction from $\Pi$ to {\sc Clique} ({\sc Dominating Set}, respectively).}
Thus, every {\sf W}[1]-hard problem $\Pi_1$ ( {\sf W}[2]-hard problem $\Pi_2$, respectively) is not `easier' than {\sc Clique} ({\sc Dominating Set}, respectively), i.e., {\sc Clique} ({\sc Dominating Set}, respectively) has a parameterized reduction to $\Pi_1$ ($\Pi_2,$ respectively). 

More information on parameterized algorithms and complexity can be found in recent books~\cite{cygan2015,downey2013}.

\section{Parameterized Hardness of UAQs without SoD constraints}\label{sec:hard}


In this section, we consider {\UAQ} problems with $\cD=\emptyset$.
We will first consider two {\sc Simple} {\UAQ} problems defined as follows:

\begin{description}
\item[{\sc Simple UAQ of Type 1}:]  $k_r = |R|$ and $\cD = \emptyset$. Thus, an instance of {\sc $k_p$-Simple} UAQ problem can be written as a tuple $(R, P, RP, P_{\ell b}, k_p)$.
\item[{\sc Simple UAQ of Type 2}:] $P_{\ell b}  = P$ and $\cD = \emptyset$. Thus, an instance of {\sc $k_r$-Simple} UAQ problem can be written as a tuple $(R, P, RP, k_r)$.
\end{description}  

Note that {\sc Simple UAQ of Type 1} is a natural simplification of UAQ, which prior research has established is {\sf NP}-hard~\cite{ChCr09}, thus establishing that more complex variants of the problem are also hard.
Below we will strengthen this result, showing (i)~that it is ``hard'' from the {\sf FPT} perspective (unlike many versions of WSP~\cite{CoCrGaGuJo14}), and (ii)~that {\sf FPT} versions of the problem do exist when a different small parameter is considered. 
Type 2 problems essentially ask whether it is possible to find a small set of roles that are collectively authorized for a set of permissions. 
An algorithm to solve this problem may well have use in applications such as role mining.

\begin{theorem}
\label{thm:hardness-simplified-uaq}
The {\sc Simple  UAQ problem of Type 1} ({\sc of Type 2}, respectively) parameterized by $k_p$ (by $k_r$, respectively) is {\sf W}[2]-hard.
\end{theorem}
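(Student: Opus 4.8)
The plan is to prove both statements by exhibiting parameterized reductions from \textsc{Dominating Set}, the canonical {\sf W}[2]-complete problem recalled in Section~\ref{sec:prel}. The key observation is that \textsc{Simple UAQ of Type 2} is essentially \textsc{Set Cover} (roles are sets, permissions are the universe, and one asks for at most $k_r$ roles whose authorized permissions cover $P$, since $P_{\ell b}=P_{ub}=P$ and $\cD=\emptyset$), while \textsc{Simple UAQ of Type 1} is a \textsc{Red-Blue Set Cover}-type problem in which $P_{\ell b}$ plays the role of the ``blue'' elements that must all be covered and $P\setminus P_{\ell b}$ the ``red'' elements of which at most $k_p$ may be covered. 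For Type 1 the number of roles used is unbounded (as $k_r=|R|$), so the main issue will be to force the parameter $k_p$ to control that number.

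For \textsc{Simple UAQ of Type 2}, given an instance $(G,k)$ of \textsc{Dominating Set} with $G=(V,E)$, I would set $R=\{r_v : v\in V\}$, $P=\{p_v : v\in V\}$, and put $(r_v,p_u)\in RP$ exactly when $u\in N_G[v]:=N_G(v)\cup\{v\}$, so that $P(r_v)=\{p_u : u\in N_G[v]\}$; then take $P_{\ell b}=P$, $\cD=\emptyset$, and $k_r=k$. For a set $D\subseteq V$, the role set $R_s=\{r_v : v\in D\}$ satisfies $P(R_s)=P$ precisely when $\bigcup_{v\in D}N_G[v]=V$, i.e.\ when $D$ is a dominating set; hence $(G,k)$ is a yes-instance iff the constructed instance is, and since $k'=k$ and the construction is polynomial, this is a parameterized reduction.

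For \textsc{Simple UAQ of Type 1}, I would use the same closed-neighbourhood gadget but additionally equip each role with a private witness permission lying outside $P_{\ell b}$, so that the number of overflow permissions of a candidate role set equals its cardinality. Concretely, from $(G,k)$ build $R=\{r_v : v\in V\}$, $P=\{p_v : v\in V\}\cup\{x_v : v\in V\}$, $P_{\ell b}=\{p_v : v\in V\}$, $P(r_v)=\{p_u : u\in N_G[v]\}\cup\{x_v\}$, $k_r=|R|$, $k_p=k$, and $\cD=\emptyset$. For any $D\subseteq V$ and $R_s=\{r_v : v\in D\}$ we have $P_{\ell b}\subseteq P(R_s)$ iff $D$ dominates $G$, and $P(R_s)\setminus P_{\ell b}=\{x_v : v\in D\}$ has size exactly $|D|$; thus $R_s$ is a feasible solution with $|P(R_s)\setminus P_{\ell b}|\le k_p$ iff $D$ is a dominating set of size at most $k$. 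Since every role subset of $R$ is of the form $\{r_v : v\in D\}$ for some $D\subseteq V$, the two instances are equivalent, and again $k'=k$.

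Finally I would verify that applying Reduction Rule~0 to either constructed instance changes nothing essential: in both cases $P_{ub}=P$, so part~(ii) deletes no role; part~(i) would only delete a role $r_v$ with $P(r_v)\cap P_{\ell b}=\emptyset$, which never happens in the Type~1 construction (as $p_v\in P(r_v)$) and in the Type~2 construction is avoidable by assuming every vertex of $G$ has a nonempty closed neighbourhood; and parts~(iii)--(v) are vacuous since $\cD=\emptyset$. I do not expect a genuine obstacle here; the single point requiring care is the ``only if'' direction for Type~1 --- that an \emph{arbitrary} feasible UAQ solution yields a dominating set of size at most $k$ --- which relies exactly on the witness permissions $x_v$ being pairwise distinct and lying outside $P_{\ell b}$, so that the bound $|P(R_s)\setminus P_{\ell b}|\le k_p$ translates into a bound on the number of selected roles.
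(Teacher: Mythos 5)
Your proposal is correct and follows essentially the same route as the paper: the paper reduces from \textsc{Red-Blue Dominating Set} (the bipartite incidence form of \textsc{Dominating Set}), identifying roles with one partite set and $P_{\ell b}$ with the other, and for Type~1 it uses exactly your trick of giving each role a private permission outside $P_{\ell b}$ so that the budget $k_p$ counts the number of selected roles. Your use of plain \textsc{Dominating Set} with closed neighbourhoods is only a cosmetic difference from the paper's choice of source problem.
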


To prove this theorem, we will use the following problem, which is {\sf W}[2]-complete \cite{DowneyF99}.

\defparproblem{\RBDS}{A bipartite graph $G = (A\uplus B, E)$}{$k$}{Is there a subset $S$ of $A$ of size $k$ such that $N(S)=B$?}

The proof of the theorem is based on parameterized reductions from {\RBDS} to the Simple UAQ problems.
 
\begin{proof}[Proof of Theorem~\ref{thm:hardness-simplified-uaq}] Let $(G=(A\uplus B, E), k)$ be an instance of {\RBDS} problem.

\vspace{2mm}

{\bf Type 1.} Let $L = \{p_v \mid \ v \in A\}.$ Set $R = A$, $P_{\ell b}=B,$
$P = B\uplus L,$ $RP=E\cup \{(v,p_v) \mid \ v \in A\}$  and $k_p = k$. Let $S\subseteq A.$
Note that by definitions of \RBDS{} and Simple  UAQ problem of Type 1, $S$ is a solution of an {\RBDS} instance, i.e. $N(S)=B$ and $|S|\le k,$ if and only if
$S$ is a solution of the corresponding {UAQ} instance, i.e., $P_{lb}\subseteq P(S)$ and $|P(S)\setminus P_{lb}|=|{p_v: v\in S}|\le k = k_p.$
Thus, $(G=(A\uplus B, E), k)$ is a yes-instance of {\RBDS} if and only if $(R, P, RP, P_{\ell b}, k_p)$ is a yes-instance of {\sc Simple UAQ of Type 1}. 
Since  {\RBDS} is {\sf W}[2]-hard this reduction shows that {\sc Simple UAQ of Type 1} is also {\sf W}[2]-hard.

\vspace{2mm}

{\bf Type 2.} 
Set $R = A$, $P = B,$ $RP=E$  and $k_r = k$.
This gives us an instance $(R, P, RP, k_r)$ of {\sc Simple UAQ of Type 2}.  
By the definitions of {\RBDS} and {\sc Simple UAQ of Type 2} and setting $R_s=S$, it follows that $(G=(A\uplus B, E), k)$ is a yes-instance of {\RBDS} if and only if $(R, P, RP, k_r)$ is a yes-instance of {\sc Simple UAQ of Type 2}. 
Since  {\RBDS} is {\sf W}[2]-hard this reduction shows that {\sc Simple UAQ of Type 2} is also {\sf W}[2]-hard.
\end{proof}

Theorem~\ref{thm:hardness-simplified-uaq} asserts that {\sc Simple UAQ of Type 1} parameterized by $k_p$ is {\sf W}[2]-hard, implying that it is highly unlikely to be {\sf FPT}. 
This result is somewhat unexpected, given that WSP (which appears to be a more complex problem) is FTP for most instances that are likely to arise in practice~\cite{CoCrGaGuJo14,CrGaGuJoWa16}.

However, we are able to prove that {\sc Simple UAQ of Type 1} is {\sf FPT} when parameterized by a different parameter.  
Let $R_1 = \{r \in R \mid P(r) \subseteq P_{lb}\}$, $R_2 = R \setminus R_1$, and $\hat{r}=|R_2|.$
In other words, $\hat{r}$ is the number of roles that are assigned to at least one permission outside $P_{lb}$.
We then have the following result.

\begin{theorem}
{\sc Simple UAQ of Type 1}  admits an algorithm of running time $\cO^*(2^{\hat{r}}).$ Thus, {\sc Simple UAQ of Type 1}  parameterized by $\hat{r}$ is {\sf FPT}.
\end{theorem}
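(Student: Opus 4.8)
The plan is to exploit the fact that, once $k_r=|R|$, there is no bound on the number of roles a solution may use, so every role in $R_1$ is ``harmless'': by definition of $R_1$ we have $P(r)\subseteq P_{\ell b}$ for each $r\in R_1$, hence adding such roles to a candidate set can only help to cover $P_{\ell b}$ and never increases the number of permissions contributed outside $P_{\ell b}$. I would first record the observation that if $R_s$ is any solution then $R_s\cup R_1$ is also a solution: indeed $P_{\ell b}\subseteq P(R_s)\subseteq P(R_s\cup R_1)$, and since $P(R_1)\subseteq P_{\ell b}$ we have $P(R_s\cup R_1)\setminus P_{\ell b}=P(R_s)\setminus P_{\ell b}$, while $|R_s\cup R_1|\le|R|=k_r$. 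Therefore it suffices to search only among role sets of the restricted form $S\cup R_1$ with $S\subseteq R_2$.

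\textbf{The algorithm.} After applying Reduction Rule~0 (so that $P_{ub}=P$), I would compute $R_1$, $R_2=R\setminus R_1$ and $P(R_1)$, and then enumerate all $2^{|R_2|}=2^{\hat r}$ subsets $S\subseteq R_2$. For each $S$, compute $P(S)=\bigcup_{r\in S}P(r)$ from $RP$ and test whether (a)~$P_{\ell b}\setminus P(R_1)\subseteq P(S)$ and (b)~$|P(S)\setminus P_{\ell b}|\le k_p$. If some $S$ passes both tests, output the solution $R_s=S\cup R_1$; otherwise report a no-instance. Each subset is handled with a polynomial amount of work (a few set unions and containment checks), so the total running time is $\cO^*(2^{\hat r})$, which gives the claimed bound and {\sf FPT}-membership parameterized by $\hat r$.

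\textbf{Correctness.} For the ``yes'' direction, if $S$ passes (a) and (b) then by (a) we have $P_{\ell b}\subseteq P(R_1)\cup P(S)=P(S\cup R_1)$; moreover $P(S\cup R_1)\setminus P_{\ell b}=P(S)\setminus P_{\ell b}$ (again using $P(R_1)\subseteq P_{\ell b}$), which has size at most $k_p$ by (b), and $|S\cup R_1|\le|R|=k_r$, so $S\cup R_1$ is a genuine solution. Conversely, if the instance has a solution $R_s$, set $S:=R_s\cap R_2$; since $P(R_s\cap R_1)\subseteq P_{\ell b}$ we get $P(S)\setminus P_{\ell b}=P(R_s)\setminus P_{\ell b}$, establishing (b), and from $P_{\ell b}\subseteq P(R_s)\subseteq P(R_1)\cup P(S)$ we get $P_{\ell b}\setminus P(R_1)\subseteq P(S)$, establishing (a). Hence the enumeration encounters a subset passing both tests and correctly answers ``yes''.

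\textbf{Main obstacle.} There is no substantial technical difficulty: the whole argument rests on the single observation that the roles of $R_1$ may be added to any candidate set for free, after which the problem collapses to a brute-force search over subsets of $R_2$. The only point requiring a little care is the ``only if'' direction of correctness, where one uses $P(R_s\cap R_1)\subseteq P_{\ell b}$ to argue that deleting from a solution all roles outside $R_2$ leaves the set of permissions contributed outside $P_{\ell b}$ unchanged, and shrinks the coverage of $P_{\ell b}$ only in a way already compensated by $P(R_1)$.
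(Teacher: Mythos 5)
Your proposal is correct and follows essentially the same approach as the paper's proof: enumerate all $2^{\hat r}$ subsets $S_2\subseteq R_2$, adjoin $R_1$ for free (which is harmless since $P(R_1)\subseteq P_{\ell b}$ and $k_r=|R|$), and check coverage of $P_{\ell b}$ together with the bound on extra permissions in polynomial time per subset. Your write-up merely spells out the correctness argument in more detail than the paper does.
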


\begin{proof}
Let ${\cI}$ be an instance of {\sc Simple UAQ of Type 1}. 
We consider every subset $S_2$ of $R_2$.
If $|P(S_2 \cup R_1)| \leq k_p + |P_{lb}|$ and $P(S_2 \cup R_1) \supseteq P_{lb}$, then ${\cI}$ is a yes-instance.
If no such $S_2$ exists, then ${\cI}$ is a no-instance.

Checking whether $|P(S_2 \cup R_1)| \leq k_p + |P_{lb}|$ and $P(S_2\cup R_1) \supseteq P_{lb}$ can be done in polynomial time.  
Thus, the overall running time is $\cO^*(2^{\hat{r}})$.
\end{proof}

It is not clear how useful this result will be in practice.
Further research is required to determine the likelihood of this parameter being small in real-world instances.

In the rest of this section, we will prove the following:

\begin{theorem}\label{thm:noSoD2}
{UAQ} parameterized by $|P_{\ell b}| + k_p$ is {\sf W}[1]-hard even under the following restrictions:
\begin{itemize}
	\item for all $r \in R$, $|P(r)| \leq 3$ and $1 \leq |P(r) \cap P_{lb}| \leq 3$ , and
	\item for all $r,r' \in R$, $|P(r) \cap P(r')| \leq 2$.
\end{itemize} 
\end{theorem}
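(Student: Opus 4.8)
The plan is to give a parameterized reduction from a $W[1]$-hard problem with a product-like parameter, and the natural candidate is \textsc{Multicolored Clique} (or equivalently \mulbiclique), which is $W[1]$-hard parameterized by the number of color classes. Given an instance of \textsc{Multicolored Clique} with a $k$-partition $V_1 \uplus \dots \uplus V_k$ of the vertex set, I would build a UAQ instance in which $P_{\ell b}$ encodes a ``target'' consisting of one permission per color class and one permission per pair of color classes, so $|P_{\ell b}| = k + \binom{k}{2} = O(k^2)$; the parameter $k_p$ will be used to bound how many ``extra'' permissions (outside $P_{\ell b}$) a solution may collect, and will be set to something like $O(k^2)$ as well, so that $|P_{\ell b}| + k_p$ is a function of $k$ only. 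The roles come in two flavours: \emph{vertex roles}, one per vertex $v \in V_i$, authorized for the color-$i$ permission plus a small number of auxiliary permissions; and \emph{edge roles}, one per edge $uv$ with $u \in V_i$, $v \in V_j$, authorized for the pair-$(i,j)$ permission plus auxiliary permissions that tie the edge to its two endpoints.

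First I would fix the auxiliary gadget so that the degree restrictions hold. The key difficulty is that each role may be authorized for at most three permissions in total, of which at least one and at most three lie in $P_{\ell b}$, and any two roles may share at most two permissions; this is very tight, so the consistency enforcement between an edge role and its two endpoint vertex roles cannot be done with a single shared ``edge-endpoint'' permission in the naive way if that would force high co-degree. I would instead use permissions outside $P_{\ell b}$ as ``selection witnesses'': a vertex role for $v$ carries a private permission $q_v \notin P_{\ell b}$, and an edge role $uv$ is authorized for $q_u$ and $q_v$ (and the pair permission), so that including the edge role forces $q_u$ and $q_v$ into $P(R_s)$, and the $k_p$ budget then forces exactly one vertex chosen per color class — namely, we allow at most $k$ of the $q_\bullet$-permissions, one per color. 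One must check the arithmetic: a consistent choice of $k$ vertices and $\binom{k}{2}$ edges of a clique collects exactly the $k$ color permissions, the $\binom{k}{2}$ pair permissions (all in $P_{\ell b}$), and exactly $k$ auxiliary $q_\bullet$-permissions, which must equal $k_p$; conversely any solution staying within budget must pick one vertex role per color and, for each pair, an edge role whose endpoints are precisely the two chosen vertices, forcing a clique.

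Second, I would verify the structural restrictions explicitly. Each vertex role has $\leq 3$ permissions (color permission, private $q_v$, possibly one more bookkeeping permission) with the color permission in $P_{\ell b}$; each edge role has $\leq 3$ permissions if I am careful — pair permission ($\in P_{\ell b}$) plus at most two of the endpoint witnesses — which forces me to split the ``endpoint'' check across \emph{two} edge-roles per edge or to use the $q_u,q_v$ pair as the only two non-$P_{\ell b}$ permissions, leaving exactly room for the pair permission. The co-degree-$\leq 2$ condition is the delicate one: two edge roles sharing a common endpoint $u$ share $q_u$; if they also shared the pair permission they would share two; I must ensure they never share a third, and that a vertex role and an edge role on it share only $q_v$ (and not also, say, a color permission — so edge roles must \emph{not} carry color permissions). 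I expect this bookkeeping — getting all three constraints ($|P(r)|\le 3$, $|P(r)\cap P_{\ell b}|\ge 1$, pairwise intersection $\le 2$) to hold simultaneously while still enforcing clique consistency — to be the main obstacle, and it may require a more clever gadget (e.g. subdividing each edge role into a constant number of roles, or introducing $O(k^2)$ extra ``coordinate'' permissions outside $P_{\ell b}$ and adjusting $k_p$ accordingly). Once the gadget is fixed, correctness is a routine forward/backward argument and the reduction is clearly polynomial with parameter $O(k^2)$, establishing $W[1]$-hardness.
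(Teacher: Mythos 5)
Your reduction is essentially the paper's: the paper likewise reduces from a multicolored (bi)clique problem, creating one role per edge with permission set $\{u,v,p_{i,j}\}$ --- the pair permission in $P_{\ell b}$ plus the two endpoint witnesses outside $P_{\ell b}$ --- and using $k_p$ to cap the number of endpoint permissions, which is exactly your $q_u,q_v$ mechanism with $|P_{\ell b}|=k^2$ and $k_p=2k$. The bookkeeping you flag as the main obstacle already works out in the gadget you describe (indeed the paper needs no vertex roles or color permissions at all, since the $k^2$ pair permissions together with the budget of $2k$ endpoint permissions already force one representative per block), so no cleverer gadget is required.
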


The first restriction requires that every role is authorized for at most three permissions and at most two permissions not in $P_{lb}$. The second restriction requires that every pair of roles is authorized for at most two common permissions.

It is worth noting that the first restriction is unlikely to be satisfied by real-world RBAC instances. RBAC is based on an assumption that the complexity of managing access control systems can be reduced by introducing 
(a relatively small number of) roles as an abstraction acting as an intermediate layer between users and permissions and the roles will be assigned to many permissions and many users. 
(In other words the number of relationships that needs to be maintained is $O(|R|(|U| + |P|))$ rather than $O(|U||P|)$.) In short, imposing such a small upper bound on the number of permission assignments to each role means that this result is unlikely to be applicable to many real-world instances. (The second restriction is less problematic, at least in the non-hierarchical setting, as one would expect sets of permissions assigned to different roles to be approximately disjoint, with users being assigned to several roles.)

To prove this theorem, we provide a parameterized reduction from {\mulbiclique}, which is known to be {\sf W}[1]-hard~\cite{cygan2015}, defined as follows.

\defparproblem{\mulbiclique}{A bipartite graph $G = (A\uplus B, E), A = A_1 \uplus \cdots \uplus A_k, B = B_1 \uplus \cdots \uplus B_k$}{$k$}{Is there $A' \subseteq A, B' \subseteq B$ such that for every $i \in [k]$, $|A_i \cap A'| = |B_i \cap B'| = 1$, and $A' \cup B'$ induces a complete bipartite graph?}

\begin{construction}\label{construction:uaq-from-multi-biclique}
Consider an instance of {\mulbiclique}, where $G = (A\uplus B, E)$ is a bipartite graph, $A = A_1 \uplus A_2 \uplus \cdots \uplus A_k$, and $B = B_1 \uplus B_2 \uplus \ldots \uplus B_k$.
We construct an instance of {\UAQ}  as follows:
        \begin{itemize}
	\item $R = \{r_{uv}  \mid   uv \in E\}$,
	\item $P_{\ell b} = \{p_{i, j}  \mid  (i, j) \in [k] \times [k]\}$, 
	\item $P = P_{\ell b} \cup A \cup B$,
	\item for every $uv \in E$, if $u \in A_i, v \in B_j$, then $P(r_{uv}) = \{u, v, p_{i,j}\}$,
	\item $k_r = k^2$, $k_p = 2k.$
	\end{itemize}
\end{construction}

In the construction we create a role for every edge of $G$, a permission for every vertex in $G$, and a permission for every ordered pair in $[k]\times [k]$.
Moreover, the vertices of $G$ form $P\setminus P_{lb}.$ 
The intuition behind the construction is that the permissions associated with a role $r_{e}$ encode an edge $e = uv \in E$, and the pair $(A_i, B_j)$ to which $u$ and $v$ respectively belong.
Note that the {\UAQ} instance generated by Construction~\ref{construction:uaq-from-multi-biclique} satisfies the criteria in Theorem~\ref{thm:noSoD2}. 

Consider the instance of {\sc Multicolored Biclique} shown in Figure~\ref{fig:c1}(a) comprising sets $A = \{a_1,a_2\} \uplus \{a_3,a_4,a_5\}$ and $B = \{b_1,b_2\} \uplus \{b_3,b_4\}$. The instance has a solution, also shown in Figure~\ref{fig:c1}(a). Construction 1 generates a UAQ instance with solution $\{r_{22},r_{23},r_{32},r_{33}\}$, corresponding to the edges in the biclique solution. The permissions associated with these roles respectively include $p_{11}$, $p_{12}$, $p_{21}$ and $p_{22}$ (the set $P_{\ell b}$), reflecting the fact that the {\sc Multicolored Biclique} solution must contain an edge from every block in the partition of $A$ to every block in the partition of $B$. (Hence, there must be $k^2$ such permissions - four in this example.) Moreover, each role is assigned two further permissions, corresponding to the endpoints of the edges defining the roles. (Hence, there are a further $2k$ permissions - four in this example.)  Note that in Figure~\ref{fig:c1}(b) we
used dashed and dotted lines in the UAQ solution to differentiate between the edges that "encode" the required relationships between the blocks in the  {\sc Multicolored Biclique} instance and the edges in the  {\sc Multicolored Biclique} solution, respectively.

\begin{figure}[h]
 \begin{subfigure}[b]{0.6\textwidth}\centering
 \begin{tikzpicture}[node distance=1cm and 2cm,inner sep=1pt,minimum size=6pt,scale=0.75, transform shape]
  \node[circle,draw,fill,label=left:$a_1$] (a1) {};
  \node[circle,draw,fill,below=of a1,label=left:$a_2$] (a2) {};
  \node[circle,draw,fill=white,below=of a2,label=left:$a_3$] (a3) {};
  \node[circle,draw,fill=white,below=of a3,label=left:$a_4$] (a4) {};
  \node[circle,draw,fill=white,below=of a4,label=left:$a_5$] (a5) {};
  \node[circle,draw,fill,right=of a1,label=right:$b_1$] (b1) {};
  \node[circle,draw,fill,below=of b1,label=right:$b_2$] (b2) {};
  \node[circle,draw,fill=white,below=of b2,label=right:$b_3$] (b3) {};
  \node[circle,draw,fill=white,below=of b3,label=right:$b_4$] (b4) {};
  \draw (a1) to (b1);
  \draw (a1) to (b2);
  \draw (a1) to (b4);
  \draw (a2) to (b2);
  \draw (a2) to (b3);
  \draw (a3) to (b2);
  \draw (a3) to (b3);
  \draw (a4) to (b1);
  \draw (a4) to (b4);
  \draw (a5) to (b1);
  \draw (a5) to (b4);
  \node[circle,draw,fill,right=of b2,label=left:$a_2$] (a2s) {};
  \node[circle,draw,fill=white,below=of a2s,label=left:$a_3$] (a3s) {};
  \node[circle,draw,fill,right=of a2s,label=right:$b_2$] (b2s) {};
  \node[circle,draw,fill=white,below=of b2s,label=right:$b_3$] (b3s) {};
  \draw (a2s) to (b2s);
  \draw (a2s) to (b3s);
  \draw (a3s) to (b2s);
  \draw (a3s) to (b3s);
 \end{tikzpicture}
 \caption{{\sc Multicolored Biclique} instance and solution}
 \end{subfigure}
 \hfill
 \begin{subfigure}[b]{0.35\textwidth}\centering
 \begin{tikzpicture}[node distance=1cm and 2cm,inner sep=1pt,minimum size=6pt,scale=0.75, transform shape]
  \node[circle,draw,fill,label=left:$r_{22}$] (r22) {};
  \node[circle,draw,fill,below=of r22,label=left:$r_{23}$] (r23) {};
  \node[circle,draw,fill,below=of r23,label=left:$r_{32}$] (r32) {};
  \node[circle,draw,fill,below=of r32,label=left:$r_{33}$] (r33) {};
  \node[circle,draw,fill,right=of r22,label=right:$p_{21}$] (p21) {};
  \node[circle,draw,fill,above=of p21,label=right:$p_{12}$] (p12) {};
  \node[circle,draw,fill,above=of p12,label=right:$p_{11}$] (p11) {};
  \node[circle,draw,fill,below=of p21,label=right:$p_{22}$] (p22) {};
  \node[circle,draw,fill,below=of p22,label=right:$a_2$] (a2) {};
  \node[circle,draw,fill,below=of a2,label=right:$a_3$] (a3) {};
  \node[circle,draw,fill,below=of a3,label=right:$b_2$] (b2) {};
  \node[circle,draw,fill,below=of b2,label=right:$b_3$] (b3) {};
  \draw[dashed] (r22) to (p11);
  \draw[dashed] (r23) to (p12);
  \draw[dashed] (r32) to (p21);
  \draw[dashed] (r33) to (p22);
  \draw[dotted] (r22) to (a2);
  \draw[dotted] (r22) to (b2);
  \draw[dotted] (r23) to (a2);
  \draw[dotted] (r23) to (b3);
  \draw[dotted] (r32) to (a3);
  \draw[dotted] (r32) to (a2);
  \draw[dotted] (r33) to (a3);
  \draw[dotted] (r33) to (b3);
 \end{tikzpicture}
 \caption{UAQ solution}
 \end{subfigure}
 \caption{{\sc Multicolored Biclique} instance and corresponding UAQ solution using Construction 1}\label{fig:c1}
\end{figure}
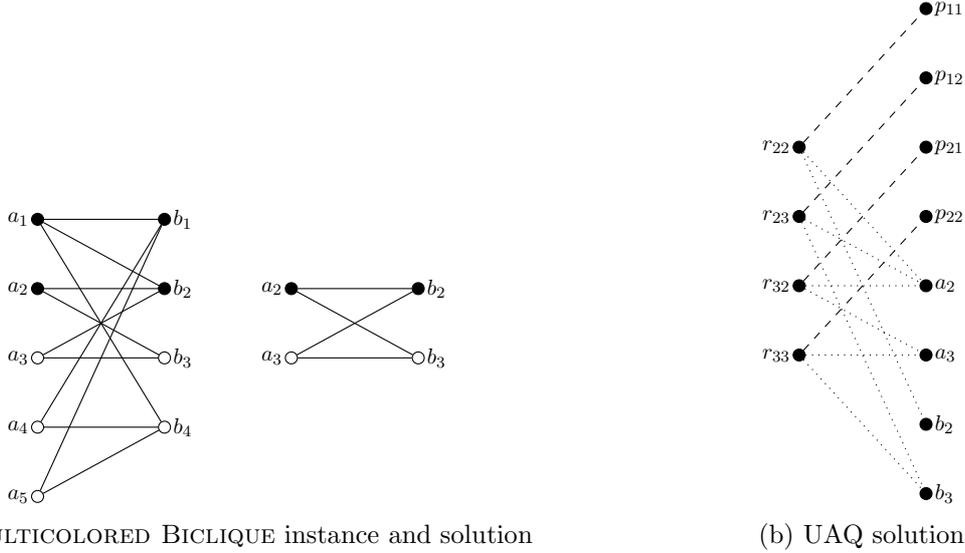

We have the following:

\begin{lemma}
\label{lemma:equivalece-mult-biclique-uaq-3}
 Let $(R, P, RP, P_{\ell b},  \emptyset, k_r, k_p)$ be a {UAQ} instance created from an instance $(G, k)$ of {\mulbiclique}.
Then, $(G, k)$ is a yes-instance if and only if $(R, P, RP, P_{\ell b},  \emptyset, k_r, k_p)$ is.
\end{lemma}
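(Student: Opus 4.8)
## Proof Proposal

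The plan is to establish the biconditional by exhibiting explicit correspondences between solutions of the two instances in both directions, exploiting the fact that, after applying Reduction Rule~0 (which does nothing here since every role $r_{uv}$ has $P(r_{uv}) \cap P_{\ell b} = \{p_{i,j}\} \neq \emptyset$ and $P(r_{uv}) \subseteq P$), a UAQ solution is simply a set of roles $R_s$ with $|R_s| \le k_r = k^2$, $P_{\ell b} \subseteq P(R_s)$, and $|P(R_s) \setminus P_{\ell b}| \le k_p = 2k$. The key structural observation driving both directions is that the ``category'' permissions $p_{i,j}$ partition the roles: a role $r_{uv}$ covers $p_{i,j}$ exactly when $u \in A_i$ and $v \in B_j$, so covering all $k^2$ permissions of $P_{\ell b}$ forces $R_s$ to contain at least one role from each of the $k^2$ categories, hence $|R_s| \ge k^2$, so in fact $|R_s| = k^2$ and $R_s$ contains \emph{exactly one} role per category.

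First I would prove the forward direction. Given a {\mulbiclique} solution $A' \cup B'$ with $A' = \{a_1, \dots, a_k\}$ (where $a_i \in A_i$) and $B' = \{b_1, \dots, b_k\}$ (where $b_j \in B_j$), inducing a complete bipartite graph, set $R_s = \{r_{a_i b_j} \mid (i,j) \in [k]\times[k]\}$; this is well-defined precisely because every edge $a_i b_j$ is present in $G$. Then $|R_s| = k^2 = k_r$, each $p_{i,j}$ is covered by $r_{a_i b_j}$ so $P_{\ell b} \subseteq P(R_s)$, and the non-$P_{\ell b}$ permissions covered are exactly $\bigcup_{i,j} \{a_i, b_j\} = A' \cup B'$, which has size $2k = k_p$. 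Since $\cD = \emptyset$, all constraints are trivially satisfied, so $R_s$ is a valid UAQ solution.

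Next I would prove the reverse direction, which is where the real argument lies. Suppose $R_s$ is a UAQ solution. By the partition observation above, $|R_s| = k^2$ and $R_s$ contains exactly one role $r_{u_{ij} v_{ij}}$ from each category $(i,j)$, with $u_{ij} \in A_i$, $v_{ij} \in B_j$. The permissions of $P \setminus P_{\ell b}$ used by $R_s$ are $\bigcup_{i,j} \{u_{ij}, v_{ij}\} =: A'' \cup B''$ where $A'' \subseteq A$, $B'' \subseteq B$, and by the budget $|A'' \cup B''| \le 2k$. Now $A''$ must contain at least one vertex from each $A_i$ (namely $u_{i1}$), so $|A''| \ge k$; likewise $|B''| \ge k$; combined with $|A'' \cup B''| \le 2k$ and disjointness of $A, B$, we get $|A''| = |B''| = k$, so $A''$ hits each $A_i$ exactly once and $B''$ hits each $B_j$ exactly once. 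This forces, for each fixed $i$, all the vertices $u_{i1}, u_{i2}, \dots, u_{ik}$ to coincide — call it $a_i$ — and for each fixed $j$, all of $v_{1j}, \dots, v_{kj}$ to coincide — call it $b_j$. Then for every $(i,j)$ the role $r_{a_i b_j}$ lies in $R_s$, which requires $a_i b_j \in E$; hence $A' = \{a_1,\dots,a_k\}$, $B' = \{b_1,\dots,b_k\}$ induces a complete bipartite subgraph with exactly one vertex per block, i.e.\ a {\mulbiclique} solution.

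The main obstacle is the counting/pigeonhole step in the reverse direction: one must carefully argue that the simultaneous constraints ``$R_s$ covers all of $P_{\ell b}$'' (forcing one role per category, hence potentially $2k^2$ endpoint-permissions) and ``$|P(R_s)\setminus P_{\ell b}| \le 2k$'' can only be reconciled when the endpoint-permissions collapse to a single vertex per block on each side — i.e.\ that the $u_{ij}$ depend only on $i$ and the $v_{ij}$ only on $j$. The tight budgets $k_r = k^2$ and $k_p = 2k$ are exactly what makes this collapse unavoidable, and spelling out that $|A''| = |B''| = k$ with equality everywhere is the crux; everything else is bookkeeping.
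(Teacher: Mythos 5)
Your proof is correct and follows essentially the same route as the paper's: the forward direction takes the $k^2$ roles corresponding to the edges of the biclique, and the backward direction uses the fact that each $p_{i,j}$ forces a role with endpoints in $A_i\times B_j$, combined with the budget $k_p=2k$, to pin down exactly one vertex per block. Your explicit ``collapse'' step (that $u_{ij}$ depends only on $i$ and $v_{ij}$ only on $j$) merely spells out a step the paper leaves terse; there is no substantive difference.
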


\begin{proof}
First we give the forward direction $(\Rightarrow)$ of the proof.
Let $(G, k)$ be a yes-instance of {\mulbiclique} and suppose that $S \subseteq V$ be a solution of $(G, k)$.
This means that for every $i \in [k]$, $|S \cap A_i| = |S \cap B_i| = 1$.
We construct the role set $R_s \subseteq R$ using $S \subseteq V$ as follows.
We put $r_{uv}$ into $R_s$ if and only if $uv$ is an edge in $G[S]$.
Observe that $P(R_s) \supseteq P_{\ell b}$ since $S$ intersects every $A_i$ and every $B_j$.
Since every role of $R_s$ corresponds to an edge in $G[S]$, and there are $k^2$ edges in $G[S]$, we have $|R_s| = k^2$.
Since there are $2k$ vertices in $S$, $R_s$ is authorized for $2k$ permissions in $P \setminus P_{\ell b}$.

Now, we give the backward direction $(\Leftarrow)$ of the proof.
Let $R_s \subseteq R$ be a set of at most $k^2$ roles that is authorized for all permissions in $P_{\ell b}$ and at most $2k$ permissions in $P \setminus P_{\ell b}$.
Consider the set $S$ of permissions from $P \setminus P_{\ell b}$ authorized by $R_s$.
We claim that $G[S]$ is a complete bipartite graph such that for all $i \in [k], |S \cap A_i| = |S \cap B_i| = 1$.
Consider an arbitrary permission $p_{i,j} \in P_{\ell b}$.
Only a role $r_{uv}$ such that $u \in A_i, v \in B_j$ is authorized for this permission.
Hence, for every $i \in [k]$, $|S \cap A_i|, |S \cap B_i| \geq 1$.
Furthermore, the roles in $R_s$ are assigned at most $2k$ permissions from $P \setminus P_{\ell b}$.
So, $|S \cap A_i| = |S \cap B_i| = 1.$
For any role $r_{uv} \in R_s$, $uv \in E$.
So, $G[S]$ induces a complete bipartite graph.
Hence, $(G, k)$ is a yes-instance for {\mulbiclique}. 
\end{proof}


\noindent{\bf Proof of Theorem \ref{thm:noSoD2}.} 
By Construction~\ref{construction:uaq-from-multi-biclique}, there exists a polynomial time reduction from {\mulbiclique} parameterized by $k$ to {UAQ} parameterized by $k^2 + 2k$ where $k_r = k^2, k_p = 2k$.
Also by this construction, every role is authorized for at most two permissions from $P \setminus P_{\ell b}$ and at most three permissions from $P$.
Since every role is authorized for exactly three permissions and no pair of roles is authorized for the same set of permissions, every pair of roles is authorized for at most two common permissions.
By construction, $|P_{\ell b}| = k^2$.
Hence, by Lemma \ref{lemma:equivalece-mult-biclique-uaq-3} and the fact that {\sc Multicolored Biclique} is W[1]-hard, {UAQ} parameterized by $|P_{\ell b}| + k_p$ is {\sf W}[1]-hard even when every role is authorized at most three permissions from $P$ 
and at most two permissions from $P \setminus P_{\ell b}$ and every two roles are authorized for at most two common permissions.
\qed

\section{W[1]-hardness when RPG is $K_{2,2}$-free}
\label{sec:w-hardness-k-ij}


We now consider restricting UAQ to instances in which $|P(r) \cap P(r')| \leqslant 1$ for all $r,r' \in R$.
In other words, any two roles are authorized for at most one common permission.
Technically, this is equivalent to saying the bipartite graph RPG contains no subgraph isomorphic to $K_{2,2}$.
Hence we will call such instances of UAQ \emph{$K_{2,2}$-free}. We show that UAQ for $K_{2,2}$-free instances is W[1]-hard. We believe that this result is useful because it demonstrates that UAQ remains a hard problem 
even when we impose a strong restriction (and one that few RBAC instances are likely to satisfy) on RPG. The practical consequence of this result is that UAQ is hard for most real-world RBAC instances with constraints. 

In Section 5, we relax the condition on RPG somewhat (making UAQ $K_{\alpha,\beta}$-free for $\alpha,\beta > 2$) but impose restrictions on the set of constraints. We show that UAQ is FPT with these restrictions. 
This positive result means that UAQ in real-world RBAC instances \emph{without} constraints (i.e., instances that trivially satisfy any restrictions on the set of constraints) may be solvable in a reasonable amount of time. 
However, the FPT algorithm required to solve UAQ is extremely complicated and requires sophisticated programming techniques. Moreover, the non-polynomial terms in the running time are significant. In short, 
we believe it is unlikely that a simple and practical FPT algorithm exists for instances of this type. 

In the remainder of the paper we will write ${\hat k}$ to denote the number of permissions that are not in $P_{\ell b}$ (i.e., $|P \setminus P_{\ell b}|$).

\begin{theorem}
\label{theorem:hardness-uaq-pairwise-intersection}
Consider a $K_{2,2}$-free {UAQ} instance.
Then, {UAQ} parameterized by $k_r +{\hat k}$  is {\sf W}[1]-hard even when every SoD constraint $\ev{X,t}$ in $\cD$ has $|X| = t = 2$.	
\end{theorem}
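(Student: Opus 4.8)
The plan is to give a parameterized reduction from \textsc{Multicolored Clique} (which is {\sf W}[1]-hard, see~\cite{cygan2015}) to the restricted form of UAQ in the statement. Recall that an instance of \textsc{Multicolored Clique} is a graph $G$ whose vertex set is partitioned into $V_1 \uplus \cdots \uplus V_k$, and we ask for a clique containing exactly one vertex of each $V_i$; we may assume every edge of $G$ joins vertices of two different parts. As in Construction~\ref{construction:uaq-from-multi-biclique}, I would turn each edge of $G$ into a role, but --- and this is the crucial point for keeping the RPG $K_{2,2}$-free --- each such role will be authorized for only a \emph{single} permission, namely one encoding the unordered pair of parts that the edge spans. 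All of the combinatorial work of selecting a mutually consistent set of edges is then pushed onto the binary SoD constraints.

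Concretely, set $R = \{r_e \mid e \in E(G)\}$, $P = P_{\ell b} = \{p_{ij} \mid 1 \le i < j \le k\}$ (so that $\hat k = 0$), and, for an edge $e$ between $V_i$ and $V_j$, let $P(r_e) = \{p_{ij}\}$. For the constraints, whenever two edges $e, e'$ each have an endpoint in some common part $V_\ell$ but those two endpoints are distinct vertices, add the SoD constraint $\ev{\{r_e, r_{e'}\}, 2}$; in particular this includes every pair of distinct edges spanning the same two parts. Finally put $k_r = \binom{k}{2}$ and $k_p = 0$. Each SoD constraint has $|X| = t = 2$ by construction; since every role is authorized for exactly one permission, any two roles share at most one common permission, so the RPG is $K_{2,2}$-free; the parameter is $k_r + \hat k = \binom{k}{2}$, which is bounded by a function of $k$; and the construction is clearly polynomial-time. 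So it remains only to verify the equivalence of the two instances.

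For the forward direction, a multicolored clique $\{v_1, \dots, v_k\}$ of $G$ gives the solution $R_s = \{r_{v_iv_j} \mid 1 \le i < j \le k\}$: it consists of $\binom k2$ roles, it covers every $p_{ij}$, and any two of its roles agree on the vertex they use in any common part (it is the clique vertex of that part), so no SoD constraint applies to them. For the converse, a UAQ solution $R_s$ has $|R_s| \le \binom k2$ yet $P(R_s)$ must contain all $\binom k2$ permissions $p_{ij}$; since $p_{ij} \in P(r_e)$ only when $e$ spans $V_i$ and $V_j$, it follows that $R_s$ consists of exactly one edge-role per pair of parts. The SoD constraints then force, for every part $V_\ell$, all chosen edge-roles touching $V_\ell$ to use the same vertex of $V_\ell$, and collecting these $k$ vertices yields --- by the definition of the chosen edges --- a multicolored clique of $G$. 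I expect this converse argument to be the main thing to pin down carefully: one must check that purely pairwise ``inconsistency'' constraints suffice to fix a single vertex per part and thereby reconstruct a clique. It also reflects the genuine design tension in the theorem: because $K_{2,2}$-freeness forbids any two roles from sharing two permissions, one cannot use shared ``gadget'' permissions (the way Construction~\ref{construction:uaq-from-multi-biclique} uses vertex-permissions together with a $k_p$ budget), so consistency \emph{must} be enforced by overlapping binary SoD constraints --- which is precisely why this hardness does not contradict the {\sf FPT} result of Section~\ref{sec:k-i-j-free}, where the SoD constraints are required to be pairwise disjoint. The stated extension to the case where no $\alpha$ roles are collectively authorized for $\beta$ permissions (for fixed $\alpha, \beta \ge 2$) is then obtained by additionally giving every edge-role spanning $V_i, V_j$ a further $\beta-2$ shared permissions $p_{ij}^{(1)}, \dots, p_{ij}^{(\beta-2)}$ placed in $P_{\ell b}$, which raises the pairwise role intersection to at most $\beta-1$ without affecting any of the arguments above.
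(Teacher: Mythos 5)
Your reduction is correct, but it is genuinely different from the one in the paper. The paper reduces from {\mulbiclique}: it creates a role $r_v$ for each \emph{vertex} plus one auxiliary role $s$, permissions $\{p_v\}\cup[2k]\cup\{q\}$ (the index permissions $i\in[2k]$ force one vertex-role per block, $q$ forces $s$, and $s$ mops up all the $p_v$'s), sets $k_r=2k+1$, and uses a binary SoD constraint for every \emph{non-edge} between $A$ and $B$; it then needs a separate case analysis (Lemma~\ref{lemma:instance-has-pairwise-intersection}) to verify $K_{2,2}$-freeness. You instead reduce from \textsc{Multicolored Clique}, create a role per \emph{edge} carrying a single permission $p_{ij}$ that names the pair of colour classes it spans, and put binary SoD constraints between every pair of edge-roles that disagree on a shared class. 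Your backward direction is sound: the cardinality count forces exactly one edge-role per pair $\{i,j\}$, and the pairwise-agreement constraints pin down a single vertex per class, which reconstructs the clique. What each approach buys: the paper's reduction has a linear parameter $k_r+\hat k=2k+1$ versus your quadratic $\binom{k}{2}$ (irrelevant for {\sf W}[1]-hardness, though it would matter for ETH-style lower bounds); your reduction is structurally cleaner --- $K_{2,2}$-freeness is immediate because every role has a single permission, there is no auxiliary role or gadget permission, and it isolates exactly the source of hardness the paper itself points to at the end of Section~\ref{sec:w-hardness-k-ij}, namely overlapping SoD constraints. Two minor remarks: your normalization of \textsc{Multicolored Clique} (discarding intra-class edges) is standard and harmless; and your proposed modification for the $K_{\alpha,\beta}$-free extension (adding $\beta-2$ shared permissions per pair) works but is unnecessary, since the paper simply observes that a $K_{2,2}$-free RPG is already $K_{\alpha,\beta}$-free for all $\alpha,\beta\ge 2$.
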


The proof is based on a parameter-preserving reduction from {\sc Multicolored Biclique} to {UAQ}.
We construct a {UAQ} instance as follows. (This construction is similar to the one used in~\cite{JacobM019}.)

\begin{construction}\label{construction:multi-biclique-to-uaq-k22-free}
Given $G = (V,E)$, where $V = A_1 \uplus \dots \uplus A_k \uplus B_1 \uplus \dots \uplus B_k$: 
\begin{itemize}
      \item $R = \{r_v \mid  v \in V\} \cup \{s\}$;
	\item $P = \{p_v \mid  v \in V\} \cup [2k] \cup \{q\}$, and $ P_{\ell b} = P$;
	\item for every $i \in [k]$, if $v \in A_i$, then $P(r_{v}) = \{p_v, i\}$;
    \item for every $i \in [2k] \setminus [k]$, if $v \in B_{i-k}$, then $P(r_{v}) = \{p_v, i\}$;
	\item $P(s) = \{p_v \mid  v \in V\} \cup \{q\}$;
	\item given $u \in A$, $v \in B$, $\langle \{r_u,r_v\} , 2 \rangle \in \cD$ iff $uv \notin E$;
	\item $k_r = 2k + 1$ and $k_p = 0$.
\end{itemize}
\end{construction}

Construction \ref{construction:multi-biclique-to-uaq-k22-free}
defines a role and a permission for each vertex. Additionally, we define role $s$ and permissions $1,\ldots ,2k$ and $q$. The additional role and permissions are used to encode the structure of the bipartite graph in terms of 
the sets into which $A$ and $B$ are partitioned, via the $RP$ relation. 
Finally, we use separation-of-duty constraints to ensure that we prohibit solutions to UAQ that are not consistent with the structure of the bipartite graph in the {\sc Multicolored Biclique} instance.

Observe that the resulting {UAQ} instance is non-hierarchical.
We now show it is $K_{2,2}$-free. 

\begin{lemma}
\label{lemma:instance-has-pairwise-intersection}
Let $(G = (A\uplus B, E), k)$ be an instance of {\mulbiclique}.
Then the UAQ instance derived using Construction~\ref{construction:multi-biclique-to-uaq-k22-free} is $K_{2,2}$-free.
\end{lemma}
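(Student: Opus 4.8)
The plan is to prove the equivalent statement that $|P(\rho)\cap P(\rho')|\le 1$ for every pair of distinct roles $\rho,\rho'$ in the instance produced by Construction~\ref{construction:multi-biclique-to-uaq-k22-free}. This suffices because RPG is bipartite with parts $R$ and $P$, so any copy of $K_{2,2}$ in RPG must consist of two roles and two permissions, and on that vertex set it is automatically an induced subgraph (there are no edges inside $R$ or inside $P$); hence RPG contains $K_{2,2}$ if and only if some two roles are jointly authorized for two common permissions.

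The key structural observation I would record first is that $P$ is the disjoint union of the three blocks $\{p_v\mid v\in V\}$, $[2k]$, and $\{q\}$, and that each role $r_v$ with $v\in V$ is authorized for exactly one permission from the first block, namely $p_v$, and exactly one permission from the second block, namely the index of the part of the partition of $V$ containing $v$ (an element of $[k]$ if $v\in A$, and of $[2k]\setminus[k]$ if $v\in B$). The role $s$ is authorized for the entire first block together with $q$, and for nothing in $[2k]$.

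With this in hand the argument is a short case analysis over unordered pairs of distinct roles. If $\rho=r_u$ and $\rho'=r_v$ with $u\ne v$, then $p_u\ne p_v$ and both lie outside $[2k]$, so the only permission these two roles could share is an index in $[2k]$; sharing such an index forces $u$ and $v$ to lie in the same part of the partition, and in every case $|P(r_u)\cap P(r_v)|\le 1$. If $\rho=s$ and $\rho'=r_v$, then the index of $v$'s part lies in $[2k]$, which is disjoint from $P(s)=\{p_w\mid w\in V\}\cup\{q\}$, so $P(s)\cap P(r_v)=\{p_v\}$, again of size at most one. These exhaust all pairs of distinct roles, so the instance is $K_{2,2}$-free.

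I do not expect a genuine obstacle here: the construction was deliberately set up so that each role's two ``real'' permissions, $p_v$ and a block index, live in disjoint parts of $P$. The step requiring the most care in the write-up is merely the (routine) reduction of $K_{2,2}$-freeness of the bipartite graph RPG to the pairwise-intersection condition $|P(r)\cap P(r')|\le 1$, which I would state explicitly before doing the case check.
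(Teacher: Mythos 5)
Your proposal is correct and follows essentially the same route as the paper: a direct case analysis over pairs of distinct roles showing $|P(r)\cap P(r')|\le 1$, splitting into the case of two vertex-roles $r_u,r_v$ (which can share at most a block index in $[2k]$) and the case involving $s$ (which shares only $p_v$ with $r_v$). Your explicit justification that $K_{2,2}$-freeness of the bipartite RPG is equivalent to the pairwise-intersection bound is a welcome bit of added care, but otherwise the arguments coincide.
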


\begin{proof}
By construction, $R = \{r_v  \mid  v \in V\} \cup \{s\}$.
Consider $r_u, r_v$ for two distinct vertices $u, v \in V$.
The following cases can arise.
\begin{description}
	\item[Case 1:] If $u, v \in A_i$ for some $i \in [k]$, then $P(r_u) = \{p_u, i\}, P(r_v) = \{p_v, i\}$.
	Similarly, if $u, v \in B_{i-k}$ for some $i \in [2k] \setminus [k]$, then $P(r_u) = \{p_u, i\}, P(r_v) = \{p_v, i\}$.
	Then, $|P(r_u) \cap P(r_v)| \leq 1$.
	
	\item[Case 2:] If $u \in A_i, v \in A_j$ for $i, j \in [k], i \neq j$, then $P(r_u) \cap P(r_v) = \{p_u, i\} \cap \{p_v, j\}= \emptyset$.
	Similarly, if $u \in B_{i - k}, v \in B_{j - k}$ for two distinct $i, j \in [2k] \setminus [k]$, then $P(r_u) \cap P(r_v) = \emptyset$.
	Similarly, if $u \in A_i, v \in B_{j - k}$ for $i \in [k], j \in [2k] \setminus [k]$, then $P(r_u) \cap P(r_v) = \emptyset$.
	
	\item[Case 3:] Let $u \in V$.
	Recall that $P(s) = \{p_u \mid u \in V\} \cup \{q\}$, and $P(r_u) = \{p_u, i\}$ for some $i \in [2k]$.
	Hence, $P(r_u) \cap P(s) = \{p_u\}$.  
\end{description}
The above cases are exhaustive and, in each case, every pair of distinct roles is assigned at most one common permission.
\end{proof}

\begin{lemma}
\label{lemma:hardness-kij-free}
$(R, P, RP, \cD,k_r,k_p)$ is a yes-instance for {UAQ} if and only if $(G,k)$ is a yes-instance for {\mulbiclique}.
\end{lemma}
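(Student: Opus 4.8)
\textbf{Proof proposal for Lemma~\ref{lemma:hardness-kij-free}.}

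The plan is to prove both directions of the equivalence by matching the $2k+1$ roles of a UAQ solution against the $2k$ vertices of a \mulbiclique{} solution (plus the role $s$), and using the SoD constraints to force the adjacency that a biclique requires. First I would observe the role of the permissions $[2k]$ and $q$: since $P_{\ell b}=P$, a UAQ solution $R_s$ must cover \emph{every} permission, in particular $q$, which only $s$ provides, so $s\in R_s$; and for each $i\in[2k]$, the permission $i$ is provided only by roles $r_v$ with $v$ in the corresponding block $A_i$ (or $B_{i-k}$), so $R_s$ must contain at least one such role. With $k_r=2k+1$ and $s$ already spent, this leaves at most $2k$ further roles to cover the $2k$ ``block'' permissions, forcing exactly one role $r_v$ per block; write $S$ for the set of the $2k$ corresponding vertices. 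Since $k_p=0$ and $P_{\ell b}=P$, the condition $P(R_s)\subseteq P_{\ell b}$ and the bound on $|P(R_s)\setminus P_{\ell b}|$ are automatic once covering holds, so the only remaining constraint is satisfaction of $\cD$.

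For the forward direction ($\Leftarrow$), given a \mulbiclique{} solution $A'\cup B'=S$ I would set $R_s=\{r_v\mid v\in S\}\cup\{s\}$. This has size $2k+1=k_r$; it covers every $p_v$ (via $r_v$ for $v\in S$ together with $s$, which actually already covers all of them), every $i\in[2k]$ (via the unique chosen vertex in each block), and $q$ (via $s$); and $P(R_s)\subseteq P=P_{\ell b}$ with $|P(R_s)\setminus P_{\ell b}|=0=k_p$. It remains to check no SoD constraint is violated: a constraint $\langle\{r_u,r_v\},2\rangle$ with $u\in A$, $v\in B$ exists exactly when $uv\notin E$; since $S$ induces a complete bipartite graph, for every $u\in A'$, $v\in B'$ we have $uv\in E$, hence no such constraint involves two roles both in $R_s$, and $R_s$ satisfies $\cD$.

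For the backward direction ($\Rightarrow$), starting from a UAQ solution $R_s$, the analysis above yields $s\in R_s$ and a vertex set $S$ with $|S\cap A_i|=|S\cap B_i|=1$ for all $i\in[k]$. Writing $A'=S\cap A$, $B'=S\cap B$, I must show $A'\cup B'$ induces a complete bipartite graph, i.e.\ $uv\in E$ for every $u\in A'$, $v\in B'$. Suppose not: then $uv\notin E$, so $\langle\{r_u,r_v\},2\rangle\in\cD$, but $r_u,r_v\in R_s$ activates $2$ roles from this set, violating the constraint --- contradiction. Hence $(G,k)$ is a yes-instance of \mulbiclique{}.

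The main obstacle --- really the only delicate point --- is pinning down exactly which roles a UAQ solution is \emph{forced} to contain: one must argue carefully from $P_{\ell b}=P$ and the cardinality bound $k_r=2k+1$ that $s$ is mandatory and that precisely one role is picked from each of the $2k$ blocks, leaving no slack for ``extra'' roles that could otherwise dodge an SoD constraint. Once this counting argument is nailed down, the bijection between $R_s\setminus\{s\}$ and a transversal $S$ of the blocks is immediate, and the SoD constraints translate membership in $R_s$ directly into edges of $G$; the rest is bookkeeping, and it is worth noting that the permissions $p_v$ and the role $s$ play no role in enforcing structure (they merely make $P_{\ell b}=P$ coverable) so the argument hinges entirely on the $[2k]$-permissions and the constraints.
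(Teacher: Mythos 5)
Your proof is correct and follows essentially the same route as the paper's: force $s$ into any UAQ solution via the permission $q$, use the block permissions $[2k]$ together with the budget $k_r=2k+1$ to extract a transversal $S$ with exactly one vertex per block, and translate the SoD constraints (which encode non-edges) into the completeness of $G[S]$. Your counting argument for why exactly one role per block is selected is in fact spelled out slightly more explicitly than in the paper, but the substance is identical.
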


\begin{proof}
First we give the forward direction $(\Rightarrow)$ of the proof.
Let $(G, k)$ be a yes-instance of {\mulbiclique}.
Then, there exists $S \subseteq V$ such that $|S| = 2k$, for all $i \in [k]$, $|A_i \cap S| = |B_i \cap S| = 1$, and $G[S]$ induces a complete bipartite graph.
We construct $R^*$ from $S$ as follows:
$R^* = \{r_u \mid  u \in S\} \cup \{s\}$.
By construction $|R^*| = 2k + 1 = k_r$.
Since for every $i \in [k]$, $S$ contains exactly one vertex from $A_i$, $R^*$ is authorized for all permissions in $[k]$.
Since for every $i \in [2k] \setminus [k]$, $S$ contains exactly one vertex from $B_{i - j}$, $R^*$ is authorized for all permissions in $[2k] \setminus [k]$.
The role $s$ is authorized for all permissions in $P \setminus [2k]$.
Hence, $R^*$ is authorized for all permissions in $P.$ 

Consider an arbitrary SoD constraint $\ev{\{r_u, r_v\},2} \in \cD$.
By construction, we have $u \in A, v \in B$, and $uv \notin E$.
But, then $\{u, v\} \not\subseteq S$.
However, by construction, $G[S]$ induces a complete bipartite graph in $G$ with bipartition $S_1 \uplus S_2$ such that $S_1 \subseteq A, S_2 \subseteq B$.
Hence, for every $u \in S_1, v \in S_2$, $uv \in E$.
Thus, all the SoD constraints are satisfied for $R^*$.
Hence, $(R, P, RP, \cD, k_r, k_p)$ is a yes-instance of {UAQ}.

Now, we give the backward direction $(\Leftarrow)$ of the proof.
Let $(R, P, RP, \cD, k_r, k_p)$ be a yes-instance of {UAQ}.
Then, there exists a set $R^*$ of at most $k_r$ roles that are authorized for all permissions of $P.$ 
Observe that $s \in R^*$ since $q \in P$.
Moreover, $R^*$ is authorized for all permissions in $[2k]$.
Every $i \in [k]$ can be authorized by a unique role $r_v$ such that $v \in A_i$.
Similarly, every $i \in [2k] \setminus [k]$ can be authorized by a unique role $r_v$ such that $v \in B_{i - k}$.
We construct $S$ from $R^*$ as follows.
We put $v \in S$ if $r_v \in R^*$.
Since $R^* \setminus \{s\}$ contains $2k$ roles, $S$ contains $2k$ vertices.
Hence, $|S| = 2k$ and $|S \cap A_i| = |S \cap B_i| = 1$.

We have $S = S_1 \uplus S_2$, where $S_1=S\cap A$ and $S_2=S\cap B.$
Note that $R^*$ satisfies all constraints in $\cD$ and consider a constraint $\ev{\{r_u,r_v\}, 2} \in \cD$ such that w.l.o.g. $u\in A$ and $v\in B$. 
Since $R^*$ satisfies this constraint, $|R^*\cap \{r_u,r_v\}|\le 1$ and hence $|\{u,v\}\cap S|\le 1$. Hence, for every $u\in S_1$ and $v\in S_2$, we have $uv\in E$ 
implying that $G[S]$ is a complete bipartite graph.
 Therefore, $(G, k)$ is a yes-instance for {\mulbiclique}.  
\end{proof}

\begin{proof}[Proof of Theorem~\ref{theorem:hardness-uaq-pairwise-intersection}]
Based on the above-mentioned construction and Lemma~\ref{lemma:hardness-kij-free}, an instance $(G, k)$ of {\mulbiclique} can be transformed into an equivalent instance $(R, P, RP, \cD,k_r,k_p)$ of {UAQ} where every SoD constraint is of the form $\ev{X, 2}$ such that $|X| = 2$.
By construction, $P_{lb} = P$.
The parameter is transformed from $k$ to $k_r + {\hat k} = 2k + 1$.
Hence, this is a parameterized reduction and the fact that {\sc Multicolored Biclique} is W[1]-hard, {UAQ} is {\sf W}[1]-hard when parameterized by $k_r + {\hat k}$.
\end{proof}

Let $\alpha\ge 2$ and $\beta\ge 2$ be fixed integers and assume that every $\alpha$ roles  are authorized for at most $\beta-1$ common permissions. In other words, RPG is $K_{\alpha,\beta}$-free.  We call such UAQ the {\em $K_{\alpha,\beta}$-free UAQ} problem. 
Note that if RPG is  $K_{2,2}$-free then it is $K_{\alpha,\beta}$-free for every $\alpha\ge 2$ and $\beta\ge 2.$
Thus, $K_{2,2}$-free UAQ is a special case of $K_{\alpha,\beta}$-free UAQ. Therefore, parameterized intractability of Theorem~\ref{theorem:hardness-uaq-pairwise-intersection} can be extended to arbitrary   integers $\alpha\ge 2$ and $\beta\ge 2$. As the main result of the next section shows, this parameterized intractability can be attributed to the fact that here SoD constraints can have overlapping role sets.

\section{Fixed-Parameter Tractability when RPG is $K_{\alpha,\beta}$-free and Constraints are Non-intersecting}
\label{sec:k-i-j-free}

Let $\alpha\ge 2$ and $\beta\ge 2$ be fixed integers.
In this section, we consider the {\UAQ} problem restricted by the following conditions:
\begin{description}
\item[(i)] for every set of $\alpha$ roles, $\{r_1,\ldots,r_{\alpha}\}$, $|P(r_1) \cap P(r_2) \cap \cdots \cap P(r_{\alpha})| < \beta$;
\item[(ii)] there is a constant $c$ such that for every SoD constraint $\ev{X, t}$ we have $|X|\le c$;
\item[(iii)] for every pair $\ev{X_1, t_1}, \ev{X_2, t_2}$ of SoD constraints, we have  $X_1 \cap X_2 = \emptyset$.
\end{description}
We call this the $(\alpha, \beta)$-{\UAQ} problem ($(\alpha, \beta)$-{UAQ}). 
Note that the RPG of an instance of $(\alpha, \beta)$-{UAQ} is $K_{\alpha, \beta}$-free. 

Recall that $\hat k=|P \setminus P_{\ell b}|$. 
We consider $(\alpha, \beta)$-{UAQ} parameterized by $k_r + {\hat k}$ and prove that it is {\sf FPT} by  designing an algorithm with running time $\cO^*(2^{\cO(k_r^{\alpha} + \hat k)})$ to solve the problem. 
The algorithm makes use of matroids and dynamic programming.
Moreover, we have to perform some preprocessing on an $(\alpha,\beta)$-{UAQ} instance to produce the input to the algorithm. {Note that the algorithm is not intended to inspire implementations, but rather offer a constructive proof of the main result in this section.}

In Section~\ref{sec:red}, we use reduction and branching rules to reduce the size of the original instance. 
In particular, this preprocessing phase reduces the size of $P_{\ell b}$ to $\cO(\beta k_r^{\alpha})$.
This preprocessing takes $\cO^*(\alpha^{k_r})$ time and also computes a partial solution $R_1$ to the input instance.  
In every reduction and branching rule, if we delete a role $r$ from the input and reduce $k_r$, then we add $r$ to $R_1$. Otherwise, if $k_r$ remains unchanged, we just delete $r$ from the input instance. 

After the preprocessing phase, it is possible to use a dynamic programming algorithm to determine whether there \emph{exists} a solution to the original instance.
We can use the same algorithm to determine the \emph{size} of the solution set.
However, more advanced techniques are required to compute a \emph{solution set}.

Recall that a dynamic programming (DP) algorithm stores and re-uses solutions to smaller sub-problems.
In order to find a solution to $(\alpha,\beta)$-{UAQ}, we need to store in the DP table some candidate partial solutions in each of the DP table entries. 
However, it is not sufficient to store one partial candidate solution in each DP table entry since we do not know which specific partial candidate solution would extend to a candidate solution for the entire instance. 

A naive approach could store all possible candidate partial solutions in the DP table entries. 
Then the number of candidate partial solutions in a  DP table entry could be as large as $|R|^{k_r}$; so this approach would not result in an {\sf FPT} algorithm. 
Thus, we have to store only a ``small'' subset of partial candidate solutions in each DP table entry. 
To be able to do this, we use an advanced algorithmic technique known as the \emph{method of representative families} on matroids.
This method  ensures that if the input instance is a yes-instance, then a solution will be found, despite storing only $\cO^*(2^{k_r})$ partial candidate solutions in a table entry, stored in a table containining $\cO^*(2^{\cO(k_r^{\alpha}) + \hat k})$ entries. 

In Section~\ref{sec:pdp}, we provide the terminology, notation and results on matroids and representative families that are necessary for describing and analyzing our DP algorithm based on the method of representative families on matroids. 
We also describe how to construct a special partition matroid used in the DP algorithm.
This matroid enables us to identify those candidate solutions that do not violate any constraints in the instance.

Finally, in Section ~\ref{sec:dp}, we describe and analyze our DP algorithm which takes a simplified instance as input and finds a solution $R_2$ of this instance such that $R_1\cup R_2$ is a solution of the original instance. 
We also show that our overall algorithm for $(\alpha, \beta)$-{UAQ} is {\sf FPT} parameterized by $ k_r + \hat k.$ 

\subsection{Preprocessing}\label{sec:red}

In this section, we define several reduction and branching rules.
A reduction rule for an instance $(I,k)$ of a parameterized problem is {\em safe} if it reduces $(I, k)$ to $(I',k')$ such that $(I, k)$ is a yes-instance if and only if $(I', k')$ is.
A branching rule is {\em safe} if for input $(I, k)$ it outputs a number $q$ of instances $(I_1',k_1'),\dots , (I_q',k_q')$, $q \geq 1$, such that $(I,k)$ is a yes-instance if and only if $(I_i',k_i')$ is for some $i \in [q]$.
We prove that each rule is safe and can be implemented either in polynomial time (for the reduction rules) or in {\sf FPT} time (for the branching rule).
Moreover, we construct a partial solution $R_1$ of the problem, comprising those roles that must be in any solution of the problem; $R_1$ is initialized as the empty set.


{
Assume that we have an initial instance ${{\cI}}=(R,P,RP,P_{lb},\cD,k_r,k_p)$ of $(\alpha, \beta)$-{UAQ}. 
Note that the initial instance of the problem (already simplified by Reduction Rule~\ref{rule:initial-assumption})
will not be reducible by Reduction Rule~\ref{rule:initial-assumption}. 
However, a variant of that rule is still required since it may be applied to instances generated by other reduction rules.
The following Reduction Rule~\ref{rule:no-Plb-permission} is the variant of Reduction Rule~\ref{rule:initial-assumption} used in this section.

Reduction and branching rules are applied to an initial instance of the problem in the order the rules are described in this section. For every rule we take an input instance and output an instance or a number of instances (for a branching rule) such that if the output instance or one of the output instances is different from the input instance then all the previous rules are applied to the new instance (or, each of the new instances). 

\begin{reduction rule}
\label{rule:no-Plb-permission}
 (i) delete any role $r$ such that $P(r) \cap P_{lb} = \emptyset $ (since $r$ cannot contribute to a set of roles that satisfies the instance);\\
(iii) delete any role which is in a SoD constraint $\ev{X, 1}$;\\
(iv) for any deleted role $r$  and any  SoD constraint $\langle X,t\rangle$ with $r\in X$, replace the constraint by $\langle X\setminus \{r\},t \rangle$;\\
(v) for any deleted role $r$, remove all pairs of the form $(r,p)$ from $RP$;\\
(vi) delete any constraint $\ev{X, t}$ in which $|X| < t$.	
\end{reduction rule} 
}

{
We use the following {\em update procedure} for some of our subsequent rules.

\begin{reduction procedure}{\sf UPDATE}
\label{update-procedure:deleting-essential-roles}
Let $r \in R$ be an arbitrary role.
Then, Procedure~\ref{update-procedure:deleting-essential-roles}$(r)$ executes the following steps.
\begin{itemize}
	\item Add $r$ to $R_1;$
	\item Set $R \leftarrow R\setminus  \{r\}$, $k_r \leftarrow k_r - 1$, $k_p \leftarrow k_p - |P(r) \setminus P_{\ell b}|$, $P_{\ell b} \leftarrow P_{\ell b} \setminus P(r)$,  $P \leftarrow P \setminus P(r)$; 
	\item  Remove from $RP$ all pairs which include permissions from $P(r)$; and
	\item For every constraint $\ev{X, t} \in \cD$, if $r \in X$, then replace the constraint $\ev{X, t}$ by $\ev{X \setminus \{r\}, t - 1}$.
\end{itemize}

\end{reduction procedure}
}




\begin{reduction rule}
\label{rule:pendant-vertex-domination}
Suppose that there is a permission $p \in P_{\ell b}$ that is assigned to a unique role $r \in R$. 
If $|P(r) \setminus P_{\ell b}| \geq k_p +1$, then $(R,P,RP,P_{lb},\cD,k_r,k_p)$ is a no-instance.
{
Otherwise (i.e. when $|P(r) \setminus P_{\ell b}| \leq k_p$), perform Procedure~\ref{update-procedure:deleting-essential-roles}$(r)$.
}
\end{reduction rule}

\begin{lemma}
\label{lem:safety-pendant-vertex-domination-rule}
Reduction Rule~\ref{rule:pendant-vertex-domination} is safe and can be implemented in polynomial time.
\end{lemma}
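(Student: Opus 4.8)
The plan is to verify both the safety (correctness) of Reduction Rule~\ref{rule:pendant-vertex-domination} and its polynomial-time implementability, handling the two disjoint cases of the rule separately.

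First I would establish safety. Fix a permission $p \in P_{\ell b}$ assigned to a unique role $r$. The key observation is that any solution $R_s$ must satisfy $P_{\ell b} \subseteq P(R_s)$, so in particular $p \in P(R_s)$; since $r$ is the only role authorized for $p$, every solution must contain $r$. This means $r$ is an \emph{essential} role. From this I would argue the no-instance case: if $|P(r) \setminus P_{\ell b}| \geq k_p + 1$, then already $r$ alone contributes more than $k_p$ permissions outside $P_{\ell b}$, and since $P(R_s) \supseteq P(r)$ we get $|P(R_s) \setminus P_{\ell b}| \geq |P(r) \setminus P_{\ell b}| > k_p$, contradicting the budget; hence no solution exists. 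For the remaining case $|P(r) \setminus P_{\ell b}| \leq k_p$, I would show that $(R,P,RP,P_{\ell b},\cD,k_r,k_p)$ is a yes-instance if and only if the instance produced by $\textsf{UPDATE}(r)$ (together with the understanding that $r$ has been committed into $R_1$) is a yes-instance. The forward direction: given a solution $R_s$ containing $r$, the set $R_s \setminus \{r\}$ is a solution to the updated instance — the role count drops by one so the bound $k_r - 1$ is met; the permissions in $P(r)$ (both those in $P_{\ell b}$ and those outside) have been removed from $P$, $P_{\ell b}$, and the budget $k_p$ adjusted by exactly $|P(r)\setminus P_{\ell b}|$, so the lower-bound and upper-bound conditions still hold for the residual permission set; and each SoD constraint $\ev{X,t}$ with $r \in X$ becomes $\ev{X \setminus \{r\}, t-1}$, which $R_s \setminus \{r\}$ satisfies precisely because $R_s$ satisfied the original (removing $r$ from both the set and the count $|R_s \cap X|$). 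The backward direction is symmetric: given a solution $R'_s$ to the updated instance, $R'_s \cup \{r\}$ is a solution to the original — one checks the role bound, that $P(R'_s \cup \{r\}) = P(R'_s) \uplus P(r)$ restores exactly the removed permissions so $P_{\ell b}$ is covered and the $k_p$ budget is respected, and that each modified constraint lifts back correctly. Note one should also observe $\textsf{UPDATE}$ cannot make $k_p$ negative in this case, since $|P(r) \setminus P_{\ell b}| \leq k_p$.

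For the implementation claim, detecting whether some permission $p \in P_{\ell b}$ has a unique authorized role is a single pass over $RP$ computing, for each $p$, the list of roles $r$ with $p \in P(r)$; this is polynomial in the instance size. Computing $|P(r) \setminus P_{\ell b}|$ and executing $\textsf{UPDATE}(r)$ — deleting $r$, updating the integer parameters, pruning $RP$, and rewriting the at most $|\cD|$ constraints that contain $r$ — are all clearly polynomial-time operations.

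I do not anticipate a serious obstacle here; the rule is conceptually a standard "forced element" (essential role) reduction. The one point requiring a little care is making sure the bookkeeping in $\textsf{UPDATE}$ — simultaneously shrinking $P$, $P_{\ell b}$, and $k_p$, and decrementing the thresholds of the affected constraints — is exactly the right adjustment so that solutions of the original and reduced instances correspond bijectively via $R_s \leftrightarrow R_s \setminus \{r\}$; I would write the equivalence proof by checking each of the four solution conditions (role bound, $P_{\ell b}$ coverage, $k_p$ bound, all SoD constraints) in both directions, as sketched above.
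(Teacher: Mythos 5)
Your proposal is correct and follows essentially the same route as the paper's proof: identify $r$ as a forced (essential) role, dispose of the no-instance case via the $k_p$ budget, and verify the four solution conditions in both directions under the correspondence $R_s \leftrightarrow R_s \setminus \{r\}$. The only difference is that you additionally spell out the polynomial-time implementation and the non-negativity of the updated $k_p$, which the paper leaves implicit.
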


\begin{proof}
For convenience we will denote the reduced instance by $(R', P', RP', P_{\ell b}', \cD', k_r', k_p')$.
When $r \in R$ is the unique role to which $p \in P_{\ell b}$ is assigned, then $r$ must be in any solution.
But, if $r$ is authorized for more than $k_p$ permissions from $P \setminus P_{\ell b}$, then $r$ cannot belong to any solution.
Thus, the input instance is a no-instance.
So, we assume that $|P(r) \setminus P_{\ell b}| \leq k_p$.

The backward direction ($\Leftarrow$) of the reduction rule is trivial.
Let $R^* \subseteq R'$ be a set of at most $k_r'$ roles such that $P(R^*) \supseteq P_{\ell b}'$ and $|P(R^*) \setminus P_{\ell b}'| \leq k'_p$.
{ Also, $R^*$ satisfies all SoD constraints in $\cD'$.}
Then, $R^* \cup \{r\}$ is a set of at most $k_r' +1 = k_r$ roles such that $P(R^* \cup \{r\}) \supseteq P_{\ell b}$ and $|P(R^* \cup \{r\}) \setminus P_{\ell b}| \leqslant k_p' + |P(r) \cap (P \setminus P_{\ell b})| = k_p$.
{ Also, if a constraint $\ev{X, t} \in \cD$ contains $r$, then this constraint $\ev{X, t}$ will have been replaced by $\ev{X \setminus \{r\}, t -1} \in \cD'$.
Since, $|(X \setminus \{r\}) \cap R^*| \leq t - 1$, we have that, $|X \cap (R^* \cup \{r\})| \leq t$.
Hence, all constraints in $\cD$ are also satisfied.
}

Now, we give the forward direction $(\Rightarrow)$ of the proof.
Let $(R, P, PR, P_{\ell b}, \cD, k_r, k_p)$ be a yes-instance.
Since $p \in P_{\ell b}$ is assigned to a unique role $r \in R$, $r$ must be in any solution of $(R, P, PR, P_{\ell b}, \cD, k_r, k_p)$.
If $R^*$ is a set of at most $k_r$ roles that is a solution for $(R, P, PR, P_{\ell b}, \cD, k_r, k_p)$, then $R^* \setminus \{r\}$ is a set of at most $k_r'$ roles such that $P(R^* \setminus \{r\}) \supseteq P'_{\ell b}$ and $|P(R^* \setminus \{r\}) \setminus P_{\ell b}'| \leq k'_p$.
{
Let $\ev{X, t} \in \cD$ be an SoD constraint such that $r \in X$.
Then, $\ev{X, t}$ will have been replaced by $\ev{X \setminus \{r\}, t - 1} \in \cD'$.
Hence, $\ev{X \setminus \{r\}, t - 1}$ is satisfied by $R^* \setminus \{r\}$ and all SoD constraints in $\cD'$ are also satisfied.
}
\end{proof}

\begin{branching rule}
\label{branch-rule:for-i-sized-sets} Let $b = \beta k_r^q+ \sum\limits_{a = 1}^{q - 1} k_r^a$.
For $q = 1, 2, \ldots, \alpha -2$, in this order, apply Branching Rule $1.q$ repeatedly until it no longer causes any changes to the RPG.
\begin{itemize}
	\item {\bf Branching Rule $\mathbf{1.q}$:} 
	If there exists a set $L \subseteq R$ with $(\alpha - q)$ roles such that $|(\bigcap\limits_{r \in L} P(r)) \cap P_{\ell b}| > b$, then 
		apply the following branching: for every role $r$ in $L$ such that $|P(r) \setminus P_{\ell b}| \leq k_p$,		perform Procedure~\ref{update-procedure:deleting-essential-roles}$(r)$.
\end{itemize}
\end{branching rule}

\begin{lemma}
\label{claim:i-sized-set-intersection}
Let $R^* \subseteq R$ be a set of at most $k_r$ roles that satisfies all constraints in $\cD$, where $P(R^*) \supseteq P_{\ell b}$ and $|P(R^*)| \leq |P_{\ell b}| + k_p$.
Consider an application of Branching Rule~\ref{branch-rule:for-i-sized-sets}.$q$, $1 \leq q \leq \alpha - 2$.
If $L$ is a set of roles from $R$ that satisfies the condition in Branching Rule $1.q$, then $ R^*\cap L \neq \emptyset$.
Furthermore, $R^* \cap L$ is authorized for at most $k_p$ permissions from $P \setminus P_{\ell b}$.
\end{lemma}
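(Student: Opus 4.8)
The plan is to prove the first assertion, $R^{*}\cap L\neq\emptyset$, by contradiction, using the recursive shape of the threshold $b$ together with the order in which the branching rules are applied; the second assertion will then follow at once. The structural fact I rely on is that, by the time Branching Rule~$1.q$ is applied, Branching Rules $1.1,\dots,1.(q-1)$ no longer change the RPG of the current instance.

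First I would assume $R^{*}\cap L=\emptyset$ and set $Q=\bigl(\bigcap_{r\in L}P(r)\bigr)\cap P_{\ell b}$, so that $|Q|>b=\beta k_r^{q}+\sum_{a=1}^{q-1}k_r^{a}$. Since $Q\subseteq P_{\ell b}\subseteq P(R^{*})$ and $|R^{*}|\le k_r$, a pigeonhole argument gives a role $r^{*}\in R^{*}$ with $|P(r^{*})\cap Q|\ge|Q|/k_r>b/k_r$. The arithmetic I would check is that $b/k_r=\beta$ when $q=1$, while for $q\ge 2$ one has $b/k_r=\beta k_r^{q-1}+\sum_{a=1}^{q-2}k_r^{a}$, which is exactly the threshold of Branching Rule~$1.(q-1)$ (equivalently, consecutive thresholds satisfy $b_q=k_r(b_{q-1}+1)$); this is what makes the pigeonhole bound land precisely where the earlier rule needs it. As $r^{*}\notin L$, the set $L':=L\cup\{r^{*}\}$ has $(\alpha-q)+1$ roles, and $P(r^{*})\cap Q\subseteq\bigl(\bigcap_{r\in L'}P(r)\bigr)\cap P_{\ell b}$, so the roles of $L'$ share more than $b/k_r$ permissions lying in $P_{\ell b}$.

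Then I would split on $q$. If $q=1$, then $L'$ has $\alpha$ roles with more than $\beta$ common permissions, contradicting condition~(i). If $q\ge 2$, then $L'$ has $\alpha-(q-1)$ roles and meets the threshold of Branching Rule~$1.(q-1)$; moreover $r^{*}\in L'$, being a member of the solution $R^{*}$, satisfies $|P(r^{*})\setminus P_{\ell b}|\le|P(R^{*})\setminus P_{\ell b}|\le k_p$ (using $P(R^{*})\supseteq P_{\ell b}$ together with $|P(R^{*})|\le|P_{\ell b}|+k_p$), so Branching Rule~$1.(q-1)$ would perform Procedure~\ref{update-procedure:deleting-essential-roles} on $r^{*}$ and change the RPG, contradicting that it is exhausted. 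Hence $R^{*}\cap L\neq\emptyset$. For the ``furthermore'' part, $R^{*}\cap L\subseteq R^{*}$ yields $P(R^{*}\cap L)\setminus P_{\ell b}\subseteq P(R^{*})\setminus P_{\ell b}$, and $|P(R^{*})\setminus P_{\ell b}|\le k_p$ as noted, so $R^{*}\cap L$ is authorized for at most $k_p$ permissions outside $P_{\ell b}$.

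I expect the main obstacle to be making the ``exhausted earlier rule'' step watertight: it is not enough that a large-intersection set $L'$ of the right size exists; I must verify that $L'$ contains a role Branching Rule~$1.(q-1)$ would genuinely act on, namely one with at most $k_p$ permissions outside $P_{\ell b}$ --- which is exactly why $r^{*}$ is picked inside $R^{*}$. The supporting identity $b_q=k_r(b_{q-1}+1)$ on consecutive thresholds is the small computational point that must be checked carefully so that the whole chain of cases $q=1,\dots,\alpha-2$ closes up.
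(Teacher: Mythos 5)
Your proof is correct and follows essentially the same route as the paper's: assume $R^*\cap L=\emptyset$, pigeonhole the more than $b$ common permissions of $L$ over the at most $k_r$ roles of $R^*$ to extract a heavy role $r^*$, and append it to $L$ to contradict either $K_{\alpha,\beta}$-freeness (when $q=1$) or the exhaustion of Branching Rule~$1.(q-1)$ (when $q\ge 2$). Your added check that $r^*$ has at most $k_p$ permissions outside $P_{\ell b}$, so that Rule~$1.(q-1)$ would genuinely perform an update on $L\cup\{r^*\}$, is a minor tightening of the paper's argument, which only invokes applicability of the earlier rule's precondition.
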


\begin{proof}
Let $\hat P_{\ell b} = (\bigcap_{r \in L} P(r)) \cap P_{\ell b}$. 
Let $R^* \subseteq R$ be a set of at most $k_r$ roles that satisfies all constraints in $\cD'$, is authorized for all permissions in $P_{\ell b}$, and at most $k_p$ permissions from $P \setminus P_{\ell b}$.
Consider two cases.

\noindent{\bf Case 1:} $q = 1$.
Then $|L| = \alpha - 1$.
Suppose that $R^* \cap L = \emptyset$.
Consider any role $r \in R^*$.
Since $RPG$ is $K_{\alpha,\beta}$-free, $|P(r) \cap \hat P_{\ell b}| \leq \beta - 1$.
Then, $R^*$ can be authorized for at most $(\beta-1)k_r$ permissions from $\hat P_{\ell b}$.
This implies that there exists some permission $p \in \hat P_{\ell b} \subseteq P_{\ell b}$ such that $R^*$ cannot authorize $p$.
This is a contradiction.
Hence, $L \cap R^* \neq \emptyset$.

\vspace{2mm}

\noindent{\bf Case 2:}  $2 \leq q \leq \alpha - 2$.
Suppose that $R^* \cap L = \emptyset$.
Since $|\hat P_{\ell b}| > b$, by the pigeon hole principle, there exists a role $y \in R^*$ such that $y$ is authorized for at least $b/k_r + 1 = \beta k_r^{q-1} + k_r^{q-2} + \cdots + k_r + 1 + 1$ permissions in $\hat P_{\ell b}$.
Then, consider $P(L \cup \{y\})$.
There are $(\alpha-q+1)$ roles in $L \cup \{y\}$ and $L \cup \{y\}$ is authorized for at least $\beta k_r^{q-1} + k_r^{q-2} + \cdots + k_r + 1 + 1$ common permissions in $P_{\ell b}$.
Then, Branching Rule~\ref{branch-rule:for-i-sized-sets}.$(q-1)$ is also applicable.
But, we apply Branching Rule~\ref{branch-rule:for-i-sized-sets}.$q$ only when for every $i \in [q-1]$, Branching Rule~\ref{branch-rule:for-i-sized-sets}.$i$ is not applicable.
This is a contradiction.
Hence, $R^*$ has nonempty intersection with $L$.

As $R^*$ is authorized for at most $k_p$ permissions from $P \setminus P_{\ell b}$, $L \cap R^*$ is authorized for at most $k_p$ permissions in $P \setminus P_{\ell b}$.
This completes the proof.
\end{proof}

\begin{reduction rule}
\label{rule:high-degree-removal}
Suppose that there is a role $s \in R$ such that $s$ is authorized for at least $h = \beta k_r^{\alpha-1} + k_r^{\alpha - 2} + \cdots + k_r^2 + k_r + 1$ permissions in $P_{\ell b}$.
If $s$ is authorized for more than $k_p$ permissions from $P \setminus P_{\ell b}$, then $(R, P, RP,  P_{\ell b}, \cD, k_r, k_p)$ is a no-instance.
{ Otherwise, perform Procedure~\ref{update-procedure:deleting-essential-roles}$(s)$.}
\end{reduction rule}

\begin{lemma}
\label{lemma:safeness-high-degree-removal-rule}
Reduction Rule~\ref{rule:high-degree-removal} is safe and can be implemented in polynomial time.
\end{lemma}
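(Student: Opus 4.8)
The plan is to prove safeness in two stages. First, I would show that any role $s$ satisfying the hypothesis of Reduction Rule~\ref{rule:high-degree-removal} must belong to every solution of the current instance. Given this, the two branches of the rule follow: if $s$ is authorized for more than $k_p$ permissions outside $P_{\ell b}$ then no solution can exist (so the no-instance verdict is correct), and otherwise the equivalence between the input instance and the instance produced by Procedure~\ref{update-procedure:deleting-essential-roles}$(s)$ is exactly the bookkeeping already carried out in the proof of Lemma~\ref{lem:safety-pendant-vertex-domination-rule}.

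For the forced-membership claim I would argue as follows. Let $R^*$ be any solution, so $|R^*|\le k_r$, $R^*$ satisfies $\cD$, $P_{\ell b}\subseteq P(R^*)$ and $|P(R^*)\setminus P_{\ell b}|\le k_p$; suppose for contradiction that $s\notin R^*$. Put $\hat P = P(s)\cap P_{\ell b}$, so that $|\hat P|\ge h$. Because $R^*$ covers $P_{\ell b}\supseteq\hat P$, every permission of $\hat P$ lies in $P(y)\cap P(s)\cap P_{\ell b}$ for some $y\in R^*$, and each such $y$ satisfies $y\ne s$; hence $|\hat P|\le\sum_{y\in R^*}|P(y)\cap P(s)\cap P_{\ell b}|$. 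It remains to bound $|P(y)\cap P(s)\cap P_{\ell b}|$ for a single role $y$. If $\alpha=2$, then $K_{2,\beta}$-freeness of the RPG gives $|P(y)\cap P(s)|\le\beta-1$, so $|\hat P|\le k_r(\beta-1)=\beta k_r-k_r<\beta k_r+1=h$. If $\alpha\ge 3$, then, since the rules are applied in the order stated in this section, Branching Rule~\ref{branch-rule:for-i-sized-sets}.$(\alpha-2)$ is no longer applicable to the current instance; as that sub-rule fires on any set of $\alpha-(\alpha-2)=2$ roles whose common permissions inside $P_{\ell b}$ number more than $b=\beta k_r^{\alpha-2}+\sum_{a=1}^{\alpha-3}k_r^a$, we get $|P(y)\cap P(s)\cap P_{\ell b}|\le b$ and therefore $|\hat P|\le k_r b=\beta k_r^{\alpha-1}+\sum_{a=2}^{\alpha-2}k_r^a=h-k_r-1<h$. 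Either way this contradicts $|\hat P|\ge h$, so $s\in R^*$.

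With the claim in hand the rest is routine. If $s$ has more than $k_p$ permissions outside $P_{\ell b}$, then every role set containing $s$ has more than $k_p$ such permissions, and since $s$ is forced into every solution, the instance has no solution. Otherwise I would check that a solution $R^*$ of the input instance (which contains $s$) gives the solution $R^*\setminus\{s\}$ of the instance returned by Procedure~\ref{update-procedure:deleting-essential-roles}$(s)$, and conversely that a solution $R'$ of the reduced instance gives the solution $R'\cup\{s\}$ of the input; the permission counts track correctly because Procedure~\ref{update-procedure:deleting-essential-roles} removes $P(s)$ from $P$, $P_{\ell b}$ and $RP$ and decreases $k_r$ and $k_p$ accordingly, and the SoD constraints track correctly because each $\ev{X,t}$ with $s\in X$ is replaced by $\ev{X\setminus\{s\},t-1}$. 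This is verbatim the computation in the proof of Lemma~\ref{lem:safety-pendant-vertex-domination-rule}, so I would cite it rather than repeat it. For the running time: $h$ is a constant-arithmetic expression once $\alpha,\beta$ are fixed (and we may assume $k_r\le|R|$), and applicability is decided by computing $|P(s)\cap P_{\ell b}|$ and $|P(s)\setminus P_{\ell b}|$ for each role $s$ and comparing with $h$ and $k_p$; Procedure~\ref{update-procedure:deleting-essential-roles} is also polynomial, so one application of the rule runs in polynomial time.

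The main obstacle is the forced-membership claim, in particular the case $\alpha\ge 3$: there one has to invoke the invariant that every sub-rule of Branching Rule~\ref{branch-rule:for-i-sized-sets} has already been exhausted at the moment Reduction Rule~\ref{rule:high-degree-removal} is reached (a consequence of the prescribed order of the rules), and one has to get the arithmetic identity $h=k_r b+k_r+1$ right so that the counting estimate is strict. Everything after the claim parallels Lemma~\ref{lem:safety-pendant-vertex-domination-rule} and is mechanical.
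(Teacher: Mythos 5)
Your proof is correct and takes essentially the same route as the paper's: both reduce the lemma to showing that $s$ is forced into every solution, and both derive this by bounding how much of $P(s)\cap P_{\ell b}$ the at most $k_r$ roles of a hypothetical $s$-avoiding solution could cover, using the fact that Branching Rule~\ref{branch-rule:for-i-sized-sets}.$(\alpha-2)$ has been exhausted. You are in fact slightly more careful than the paper: you treat $\alpha=2$ separately via $K_{2,\beta}$-freeness (the paper's appeal to Branching Rule~\ref{branch-rule:for-i-sized-sets}.$(\alpha-2)$ is vacuous in that case), and you spell out the equivalence bookkeeping and the polynomial running time, which the paper leaves implicit.
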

\begin{proof}
It suffices to prove that any solution of $(R, P, RP, P_{\ell b}, \cD, k_r, k_p)$ must contain $s$.
Let $A = P(s).$
Suppose that there exists a solution $S \subseteq R$ such that $s \notin S$ and $|S| \leq k_r$.
Recall that $RPG$ is $K_{\alpha, \beta}$-free.
By assumption, $S$ is authorized for all permissions in $A$.
Since $|A| \geq h$ (by the precondition of the reduction rule), there exists $s' \in S$ such that $s'$ is auhorized for at least $\ceil{h/k_r} = \beta k_r^{\alpha-2} + k_r^{\alpha-3} + \cdots + k_r^2 + 1 + 1$ permissions from $P_{\ell b}$.
Then, $R$ has $r_1, r_2 \in S$ such that $P(r_1) \cap P(r_2)$ contains at least $\beta k_r^{\alpha-2} + k_r^{\alpha-3} + \cdots + k_r^2 + 1 + 1$ permissions.
This means that the precondition of Branching Rule~\ref{branch-rule:for-i-sized-sets}.$q$ becomes applicable for $q = \alpha -2$.
This is a contradiction.
Hence, $s \in S$.
\end{proof}

{
After Reduction Rule~\ref{rule:pendant-vertex-domination}, Branching Rule~\ref{branch-rule:for-i-sized-sets}, and Reduction Rule~\ref{rule:high-degree-removal} have been performed, there is still a possibility 
that there is a role $r$ such that $|P(r) \setminus P_{\ell b}| > k_p$.
However, such roles cannot be included in any solution.
Hence, we need to apply the following reduction rule.
}

\begin{reduction rule}
\label{rule:removal-of-forbidden-roles}
Delete  any role $r$ such that $|P(r) \setminus P_{lb}| > k_p$ (since $r$ cannot be included in any yes-instance).
Remove any pair from $RP$ that contains a deleted role.
If there is a deleted role $r \in X$ for a constraint $\ev{X, t}$, then replace this constraint by $\ev{X \setminus \{r\}, t}$. 
\end{reduction rule}


%
%
%

{
\begin{lemma}
\label{lemma:reduced-instance-size-kij-free}
Suppose that $\cI = (R, P, RP, P_{\ell b},  \cD, k_r, k_p)$ is an $(\alpha,\beta)$-{\UAQ} instance for which Reduction Rules~\ref{rule:no-Plb-permission},~\ref{rule:pendant-vertex-domination},~\ref{rule:high-degree-removal}, and~\ref{rule:removal-of-forbidden-roles}, and Branching Rule~\ref{branch-rule:for-i-sized-sets} are not applicable.
If $\left|P_{\ell b}\right| > \beta k_r^{\alpha} + k_r^{\alpha-1} +\cdots + k_r^2 + k_r$, then ${{\cI}}$ is a no-instance.
\end{lemma}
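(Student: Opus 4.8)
The plan is to show that if the instance survives all the reduction and branching rules but $|P_{\ell b}|$ is too large, then no solution of size at most $k_r$ can cover all of $P_{\ell b}$ while being $K_{\alpha,\beta}$-free, so by a counting/pigeonhole argument we reach a contradiction. Suppose for contradiction that $\cI$ is a yes-instance and let $R^*\subseteq R$ with $|R^*|\le k_r$ be a solution, so $P(R^*)\supseteq P_{\ell b}$. We will bound from above the number of permissions of $P_{\ell b}$ that $R^*$ can cover, and the bound will contradict $|P_{\ell b}| > \beta k_r^{\alpha} + k_r^{\alpha-1} + \cdots + k_r^2 + k_r$.

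The heart of the argument is an inclusion–exclusion style (or layered pigeonhole) counting over $\alpha$ ``levels'' of intersection, mirroring the structure already used in Lemma~\ref{claim:i-sized-set-intersection} and Lemma~\ref{lemma:safeness-high-degree-removal-rule}. First I would observe that since Reduction Rule~\ref{rule:high-degree-removal} is not applicable, every role $r\in R$ is authorized for at most $h-1 = \beta k_r^{\alpha-1} + k_r^{\alpha-2} + \cdots + k_r^2 + k_r$ permissions from $P_{\ell b}$ (otherwise either the rule would fire, or $r$ would have more than $k_p$ permissions outside $P_{\ell b}$, in which case Reduction Rule~\ref{rule:removal-of-forbidden-roles} would have deleted it). Hence
\[
|P(R^*)\cap P_{\ell b}| \le \sum_{r\in R^*} |P(r)\cap P_{\ell b}| \le k_r\bigl(\beta k_r^{\alpha-1} + k_r^{\alpha-2} + \cdots + k_r\bigr) = \beta k_r^{\alpha} + k_r^{\alpha-1} + \cdots + k_r^2.
\]
Wait — this already gives $|P_{\ell b}|\le \beta k_r^{\alpha} + k_r^{\alpha-1} + \cdots + k_r^2$, which is even stronger than the claimed bound (the claimed bound has an extra $+k_r$ term), so in fact the simple one-level union bound suffices: if $|P_{\ell b}| > \beta k_r^{\alpha} + k_r^{\alpha-1} + \cdots + k_r^2 + k_r$ then certainly $|P_{\ell b}|$ exceeds $\sum_{r\in R^*}|P(r)\cap P_{\ell b}|$, so some permission of $P_{\ell b}$ is uncovered, a contradiction. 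I would double-check which upper bound on $|P(r)\cap P_{\ell b}|$ is actually available after the rules: the non-applicability of Reduction Rule~\ref{rule:high-degree-removal} only guarantees $< h$ for roles surviving it, and the combination with Reduction Rule~\ref{rule:removal-of-forbidden-roles} is what lets me assume every remaining role has fewer than $h$ permissions in $P_{\ell b}$ (since a role with $\ge h$ in $P_{\ell b}$ and $\le k_p$ outside would have been removed by Rule~\ref{rule:high-degree-removal}, and one with $\ge h$ in $P_{\ell b}$ and $> k_p$ outside would have triggered the no-instance branch — but then $\cI$ would not be a yes-instance). This case analysis on roles is the only slightly delicate point.

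The main obstacle I anticipate is getting the bookkeeping exactly right: ensuring that after \emph{all} the listed rules are non-applicable, every role that remains is authorized for strictly fewer than $h$ permissions in $P_{\ell b}$, and reconciling the precise arithmetic of the geometric-like sums $\beta k_r^{\alpha-1}+\cdots+k_r$ versus the threshold $\beta k_r^{\alpha}+\cdots+k_r$ in the statement. If the intended proof instead argues via the branching rule levels (Branching Rule~$1.q$ non-applicable for all $q$) rather than just Reduction Rule~\ref{rule:high-degree-removal}, the structure would be: apply the one-level union bound using the degree bound $h-1$ from Rule~\ref{rule:high-degree-removal}, which is exactly the quantity controlled by the rule, and conclude directly. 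I would write the proof in the clean one-line form: assume a solution $R^*$, use $|P(r)\cap P_{\ell b}|<h$ for all $r\in R^*$, sum over the at most $k_r$ roles, and note the total is at most $\beta k_r^{\alpha}+k_r^{\alpha-1}+\cdots+k_r^2 < |P_{\ell b}|$, so $R^*$ cannot cover $P_{\ell b}$ — contradiction, hence $\cI$ is a no-instance.
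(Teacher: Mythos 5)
Your proposal is correct and is essentially the paper's own proof: the paper likewise argues that non-applicability of Reduction Rule~\ref{rule:high-degree-removal} bounds each role's contribution to $P_{\ell b}$ by $h-1$, applies the union bound over the at most $k_r$ roles of a putative solution, and derives a contradiction with $|P_{\ell b}|$ exceeding the stated threshold. Your arithmetic ($k_r(h-1)=\beta k_r^{\alpha}+k_r^{\alpha-1}+\cdots+k_r^2$) is in fact cleaner than the paper's displayed bound, which contains a stray $k_r^{\alpha}$ term, and your worry about needing Rule~\ref{rule:removal-of-forbidden-roles} is unnecessary since non-applicability of Rule~\ref{rule:high-degree-removal} already means no surviving role has $h$ or more permissions in $P_{\ell b}$.
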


\begin{proof}
Let $\cI = (R, P, RP, P_{\ell b},  \cD, k_r, k_p)$ be an irreducible yes instance but $\left|P_{\ell b}\right| > \beta k_r^{\alpha} + k_r^{\alpha-1} +\cdots + k_r^2 + k_r$.
Suppose that  $S \subseteq R$ is an arbitrary set such that $|S| \leq k_r$ and it satisfies all the SoD constraints.
Since Reduction Rule~\ref{rule:high-degree-removal} is not applicable to ${\cI}$, $$\bigg|\bigcup\limits_{s \in S} P(s) \cap P_{\ell b}\bigg| \leq \beta k_r^{\alpha} + k_r^{\alpha} + k_r^{\alpha-1} +\cdots + k_r^2 + k_r.$$

Then, $S$ is not authorized for all permissions in $P_{\ell b}.$
This contradicts the fact that $S$ is a feasible solution to $\cI$.
Therefore, $\cI$ is a no-instance.
\end{proof}
}

Observe that we have not yet used any special characteristics of the set $\cD$ of SoD constraints.
Hence, Lemma~\ref{lemma:reduced-instance-size-kij-free} holds true for all classes of SoD constraints.
Thus we obtain the following result.

\begin{theorem}
\label{thm:partial-uaq-implication}
The  preprocessing described above runs in time $\cO^*(\alpha^{k_r})$ and for an input instance of $(\alpha,\beta)$-{UAQ}, outputs an equivalent instance of $(\alpha,\beta)$-{UAQ} such that $|P_{\ell b}|=\cO(\beta k_r^{\alpha})$.
\end{theorem}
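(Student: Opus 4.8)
The plan is to verify Theorem~\ref{thm:partial-uaq-implication} by collecting together the facts already established about the individual rules. First I would argue correctness of the output: each of Reduction Rules~\ref{rule:no-Plb-permission}, \ref{rule:pendant-vertex-domination}, \ref{rule:high-degree-removal}, \ref{rule:removal-of-forbidden-roles} is safe (by Lemmas~\ref{lem:safety-pendant-vertex-domination-rule} and~\ref{lemma:safeness-high-degree-removal-rule}, with the remaining two being immediate variants of Reduction Rule~\ref{rule:initial-assumption}), and Branching Rule~\ref{branch-rule:for-i-sized-sets} is safe (the branching directions are exactly the roles that Lemma~\ref{claim:i-sized-set-intersection} shows must meet any solution, and deleting a role with $|P(r)\setminus P_{\ell b}|>k_p$ is obviously correct). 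Since the preprocessing applies these rules repeatedly in the prescribed order until none is applicable, and since composition of safe rules is safe, the final instance is equivalent to $\cI$ (where for a branching rule ``equivalent'' means: the input is a yes-instance iff some output leaf is). When no rule applies, Lemma~\ref{lemma:reduced-instance-size-kij-free} tells us either the instance is already a no-instance (which we report) or $|P_{\ell b}|\le \beta k_r^{\alpha}+k_r^{\alpha-1}+\cdots+k_r = \cO(\beta k_r^{\alpha})$, as $\beta$ and $\alpha$ are fixed.

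Next I would bound the running time. Each reduction rule runs in polynomial time (Lemmas~\ref{lem:safety-pendant-vertex-domination-rule}, \ref{lemma:safeness-high-degree-removal-rule}, and inspection of the rules). For the branching rule, one application of Branching Rule~$1.q$ creates at most $|L|\le \alpha-q\le\alpha$ children via Procedure~\ref{update-procedure:deleting-essential-roles}, and in \emph{each} child the parameter $k_r$ strictly decreases. Hence along any root-to-leaf path in the search tree the branching rule fires at most $k_r$ times in total (summed over all values of $q$), so the search tree has at most $\alpha^{k_r}$ leaves, and the total number of nodes is $\cO^*(\alpha^{k_r})$; at each node only polynomial work is done (testing preconditions of the $\cO(1)$ rules — note that for fixed $\alpha$ the relevant sets $L$ have bounded size, so enumerating candidate $L$ is polynomial, and the thresholds $b$, $h$ are computable in polynomial time). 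This gives the claimed $\cO^*(\alpha^{k_r})$ bound. Strictly speaking the preprocessing produces a collection of leaf instances rather than a single instance; I would phrase the output as any one of these leaves (the algorithm then proceeds to solve each leaf, and the original is a yes-instance iff some leaf is), or observe that the branching can be folded into the subsequent dynamic programming so that we may speak of ``an equivalent instance'' per leaf.

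The main obstacle, and the only place requiring care, is making the termination/measure argument airtight: I must confirm that re-running all earlier rules after any change cannot cause non-termination. The key observation is that every rule either deletes a role without changing $k_r$ (only finitely many roles, so this can happen only $\le |R|$ times before the RPG stabilises) or invokes Procedure~\ref{update-procedure:deleting-essential-roles}, which decrements $k_r$; so the lexicographic measure $(k_r, |R|)$ strictly decreases at every rule application that alters the instance, forcing termination in at most $k_r\cdot|R|$ rule applications along any path. Combined with the $\le\alpha$-way branching bounded by the $k_r$ decrements, the $\cO^*(\alpha^{k_r})$ bound and hence the theorem follow. I would also remark, as the excerpt already does, that nothing in this preprocessing used conditions (ii) or (iii) on $\cD$, so Theorem~\ref{thm:partial-uaq-implication} holds for arbitrary SoD constraint sets.
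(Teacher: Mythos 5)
Your proposal is correct and follows essentially the same route as the paper's proof: polynomial-time reduction rules, an at-most-$(\alpha-1)$-way branching of depth at most $k_r$ (via the $k_r$ decrement in Procedure~\ref{update-procedure:deleting-essential-roles}) giving $\cO^*(\alpha^{k_r})$ search-tree nodes with polynomial work per node, and Lemma~\ref{lemma:reduced-instance-size-kij-free} for the bound on $|P_{\ell b}|$. The explicit lexicographic termination measure and the caveat that the output is really a collection of leaf instances are details the paper leaves implicit, but they do not change the argument.
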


\begin{proof}
Observe that all the reduction rules above can be implemented in polynomial time.
For every $q \in [\alpha-2]$, Branching Rule~\ref{branch-rule:for-i-sized-sets}.$q$ is applied only when for all $x \in [q-1]$, Branching Rule~\ref{branch-rule:for-i-sized-sets}.$x$ is not applicable.
This branching rule is applied on a set $L$ with at most $\alpha - 1$ vertices.
From Lemma~\ref{claim:i-sized-set-intersection}, if the precondition to Branching Rule~\ref{branch-rule:for-i-sized-sets} is satisfied, then $L$ has nonempty intersection with a solution of size at most $k_r$.
Hence, this branching rule provides a $(\alpha-1)$-way branching with depth at most $k_r$.
Hence, the number of leaves in this bounded search tree is $\cO^*(\alpha^{k_r})$.

{
We first apply Reduction Rules~\ref{rule:no-Plb-permission},~\ref{rule:pendant-vertex-domination}, Branching Rule~\ref{branch-rule:for-i-sized-sets}, and Reduction Rules~\ref{rule:high-degree-removal}, and~\ref{rule:removal-of-forbidden-roles} in sequence.
When these reduction and branching rules are no longer applicable, we check whether $|P_{\ell b}| > \beta k_r^{\alpha} + k_r^{\alpha} + k_r^{\alpha-1} +\cdots + k_r^2 + k_r$. 
If $|P_{\ell b}| > \beta k_r^{\alpha} + k_r^{\alpha} + k_r^{\alpha-1} +\cdots + k_r^2 + k_r$, then we use Lemma~\ref{lemma:reduced-instance-size-kij-free} to output that the given input instance is a no-instance.
Otherwise, $|P_{\ell b}| \leq \beta k_r^{\alpha} + k_r^{\alpha} + k_r^{\alpha-1} +\cdots + k_r^2 + k_r$.}

Hence, the reduced instance has $|P_{\ell b}|=\cO(\beta k_r^{\alpha}).$
Since $\alpha$ is fixed, the input instance can be transformed into an equivalent instance in $\cO^*(\alpha^{k_r})$ time.
This completes the proof.
\end{proof}

\subsection{Encoding Candidate Solutions as a Matroid}\label{sec:pdp}

This section consists of two parts: in Section~\ref{subsec:matroid-dp}, we provide terminology, notation and results on matroids and representative families, which are necessary for describing and analyzing our DP algorithm;
and in Section~\ref{subsec:partition-matroid}, we construct a matroid encoding role sets that do not violate SoD constraints.
{
We will call this matroid a {\em constraint satisfaction matroid (CSM)}.
}
This constraint satisfaction matroid is used in our DP algorithm.

\subsubsection{Matroids and Representative Families}\label{subsec:matroid-dp}

\begin{definition}\label{defn:matroid}
A family of sets $\cI$ over a finite universe $U$ is called a {\em matroid} if it satisfies the following three axioms:
\begin{enumerate}
	\item\label{matroid-prop:basic} $\emptyset \in \cI$,
	\item\label{matoid-prop:hereditary} if $A \in \cI$ and $B \subseteq A$, then $B \in \cI$, and
	\item\label{matoid-prop-exchange} if $A, B \in \cI$ such that $|A| < |B|$, then there exists $x \in B \setminus A$ such that $A \cup \{x\} \in \cI$.
\end{enumerate}
\end{definition}

{
Let $U$ be a universe of $n$ elements, and let $r$ be an integer such that $r\le n$. 
Then it is not hard to verify that $(U, \cF)$ where $\cF = \{A \subseteq U:\ |A| \leq r\}$ satisfies the axioms of Definition~\ref{defn:matroid}.
Hence, $(U, \cF)$ is a matroid; $(U, \cF)$ is called a {\em uniform matroid}.}

For a matroid $M = (U, \cI),$ any set $A \in \cI$ is called an {\em independent set}.
It follows from Definition~\ref{defn:matroid} that all maximal independent sets of a matroid $M$ have the same size, denoted $rank(M)$, and called the {\em rank} of $M$. 
Clearly, the rank of the uniform matroid $(U, \cF) = \{A \subseteq U:\ |A| \leq r\}$ is $r$.
  
A matroid $M=(U, \cI)$ is said to be {\em representable over a field $\bbF$} if there is a matrix $\hat M$ over $\bbF$ and a bijection $\pi:\ U\to {\rm col}(\hat{M})$, where ${\rm col}(\hat{M})$ is the set of columns of $\hat M$, such that $A \subseteq U$ is an independent set in $M$ if and only if $\{\pi(a):\ a\in A\}$ is linearly independent over $\bbF$. 
Clearly, the rank of $M$ is the rank of the matrix $\hat M$.
A matroid representable over a field $\bbF$ is a {\em linear matroid} over $\bbF$. 
A uniform matroid with $U$ of size $n$ can be represented over any field ${GF}(p$) for $p>n$ (see e.g.  \cite{CyganFKLMPPS15}).

\begin{definition}
\label{defn:direct-sum-of-matroid}
Let $M_1 = (U_1, \cI_1), M_2 = (U_2, \cI_2),\ldots,M_t = (U_t, \cI_t)$ be a collection of matroids such that for every $i \neq j$, $U_i \cap U_j = \emptyset$.
Then, the {\em direct sum} $M = M_1 \oplus M_2 \oplus \cdots \oplus M_t$ of these matroids is a matroid $M = (U, \cI)$ such that $U = U_1 \cup U_2 \cup \cdots \cup U_t$, and for every subset $S$ of $U$, $S \in \cI$ if and only if for all $i \in [t]$, $S \cap U_i \in \cI_i$.

A {\em partition matroid} is a matroid formed from a direct sum of uniform matroids.
\end{definition}

%

We use the following definitions and results~\cite{LokshtanovMPS18,FominLPS16} to prove Theorem~\ref{thm:search-version-solving} in Section~\ref{sec:dp}.

\begin{definition}
\label{defn:q-rep-family}
Let $M = (U, \cI)$ be a matroid and $\AAA$ be a family of independent sets of size $p$ in $M$.
For sets $A, B \subseteq U$, we say that {\em $A$ fits $B$} if $A \cap B = \emptyset$, and $A \cup B \in \cI$.

{ A subfamily $\hat \AAA \subseteq \AAA$ is a {\em $q$-representative family of $\AAA$} if the following holds:
for every $B \subseteq U$ with $|B| \le q$, there is an $A \in \AAA$ such that $A$ fits $B$ if and only if there is an $\hat A \in \hat \AAA$ such that $\hat A$ fits  $B$.
We write $\hat \AAA \subseteq_{rep}^q \AAA$ to denote that $\hat \AAA$ is a $q$-representative family of $\AAA$.}
\end{definition}


Informally, a family of sets $\AAA$ that fits $\{B \subseteq U : |B| = q\}$ provides a way of encoding all sets of cardinality $q$ in a matroid.
Thus, a $q$-representative family is a compact method of encoding all such sets.
Our dynamic programming algorithm makes use of $q$-representative families and the following results~\cite{FominLPS16,LokshtanovMPS18} to reduce the number of entries in the DP table and ensure our algorithm is {\sf FPT}. 

\begin{lemma}
\label{lemma:q-rep-property}
Let $M = (U, \cI)$ be a matroid and $\cF \subseteq \cI$ such that for every $A \in \cF$, $|A| = p$.
If $\cF_1 \subseteq_{rep}^q \cF$ and $\cF_2 \subseteq_{rep}^q \cF_1$, then $\cF_2 \subseteq_{rep}^q \cF$.
\end{lemma}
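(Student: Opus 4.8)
\textbf{Proof plan for Lemma~\ref{lemma:q-rep-property} (transitivity of $q$-representation).}

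The plan is to unfold the definition of $q$-representative family twice and chain the two biconditionals. Fix an arbitrary $B \subseteq U$ with $|B| \le q$. I want to show: there exists $A \in \cF$ such that $A$ fits $B$ if and only if there exists $A_2 \in \cF_2$ such that $A_2$ fits $B$.

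First I would prove the forward implication. Suppose there is $A \in \cF$ fitting $B$. Since $\cF_1 \subseteq_{rep}^q \cF$, the defining property applied to this very $B$ gives some $A_1 \in \cF_1$ such that $A_1$ fits $B$. Note that every element of $\cF_1$ is a member of $\cF$ and hence has size $p$, so $\cF_1$ is a legitimate family of size-$p$ independent sets and the hypothesis $\cF_2 \subseteq_{rep}^q \cF_1$ applies. Applying the representative-family property of $\cF_2$ with respect to $\cF_1$ to the same $B$, the existence of $A_1 \in \cF_1$ fitting $B$ yields some $A_2 \in \cF_2$ fitting $B$, which is what we wanted. The reverse implication is essentially immediate: if $A_2 \in \cF_2$ fits $B$, then since $\cF_2 \subseteq \cF_1 \subseteq \cF$ we have $A_2 \in \cF$ directly, so $A_2$ itself witnesses the existence of a set in $\cF$ fitting $B$. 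Since $B$ with $|B|\le q$ was arbitrary, this establishes $\cF_2 \subseteq_{rep}^q \cF$.

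This lemma is routine and there is no real obstacle; the only point requiring a moment's care is checking that $\cF_1$ consists of size-$p$ sets so that the second hypothesis is meaningful, and that the subfamily containments give $\cF_2 \subseteq \cF$ for the trivial direction. I would state the two directions explicitly and keep the argument to a few lines.
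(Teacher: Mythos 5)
Your argument is correct and complete: fixing an arbitrary $B$ with $|B|\le q$, chaining the two biconditionals gives the nontrivial direction, and the trivial direction follows from $\cF_2 \subseteq \cF_1 \subseteq \cF$; your side remark that $\cF_1$ inherits the size-$p$ property from $\cF$ is exactly the point needed for the second hypothesis to apply. The paper itself does not prove this lemma but imports it from the cited literature on representative families, and the standard proof there is precisely the definition-unfolding you give, so there is nothing to add.
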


\begin{lemma}
\label{lemma:q-rep-family-computation}
Let $M = (U, \cI)$ be a linear matroid of rank $n$. Suppose that $M$ can be represented by an $n \times |U|$-matrix $\hat M$ over a field $\mathbb{F}$ such that $\mathbb{F} = GF(s)$ or $\mathbb{F} = \mathbb{Q}$.
Furthermore, let $\cF = \{S_1,\ldots,S_t\}$ be a family of independent sets in $M$, each of cardinality $p$.
Then, there is a deterministic algorithm that computes $\hat \cF \subseteq_{rep}^q \cF$ with $\cO({{p+q}\choose{p}}^2 tp^3 n^2 + t{{p+q}\choose{p}}^{\omega} np) + (n + |U|)^{\cO(1)}$ field operations over $\mathbb{F}$ such that $|\hat \cF| \leq {{p+q}\choose{p}}$ and $\omega < 2.37$ is the matrix multiplication exponent.
\end{lemma}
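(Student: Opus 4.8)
The plan is to follow the exterior-algebra (representative-sets) method — Lovász's linear-algebraic form of Bollob\'as's two-families theorem — combined with a deterministic rank-reduction preprocessing step. First I would observe that, since every $A\in\cF$ has $|A|=p$ and we only ever need to test whether ``$A$ fits $B$'' for $|B|\le q$, only the behaviour of $M$ on sets of size at most $p+q$ is relevant. Hence I would replace $M$ by its truncation to rank $p+q$ (after first padding $M$ with $q$ coloops if its rank is below $p+q$), using the deterministic matroid truncation of \cite{LokshtanovMPS18}: given the $n\times|U|$ representation $\hat M$ over $GF(s)$ or $\mathbb{Q}$, this produces, in $(n+|U|)^{\cO(1)}$ field operations, a representation over the same field (possibly after a bounded field extension) of a matroid $M'$ on $U$ of rank $p+q$ whose independent sets of size $\le p+q$ are exactly those of $M$. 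It then suffices to compute a $q$-representative family of $\cF$ in $M'$; and since $\mathrm{rank}(M')=p+q$, a subfamily that is representative against all $B$ with $|B|=q$ is automatically representative against all $B$ with $|B|\le q$ (extend such a $B$ to size $q$ inside an independent set containing $A\cup B$ and use the hereditary axiom).

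Next I would fix the representation of $M'$ by vectors $\{x_u\}_{u\in U}\subseteq\mathbb{F}^{p+q}$ and, for independent $S=\{u_1,\dots,u_p\}$, set $v(S)=x_{u_1}\wedge\dots\wedge x_{u_p}\in\bigwedge^p\mathbb{F}^{p+q}$ (well defined up to sign, i.e.\ up to the ordering of $S$); this is a vector of $\binom{p+q}{p}$ Pl\"ucker coordinates, each a $p\times p$ minor of the $(p+q)\times p$ matrix with columns $x_{u_1},\dots,x_{u_p}$. The key identity is that, for $S$ with $|S|=p$ and $B$ with $|B|=q$ (not assumed disjoint), $S$ fits $B$ if and only if $v(S)\wedge v(B)\ne 0$ in $\bigwedge^{p+q}\mathbb{F}^{p+q}$, because a wedge of $p+q$ vectors is nonzero precisely when they are linearly independent. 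From this I get the central reduction: if $v(S_i)=\sum_j\lambda_j v(S_j)$ with each $S_j$ already kept, then $v(S_i)\wedge v(B)=\sum_j\lambda_j\,v(S_j)\wedge v(B)$, so whenever $S_i$ fits a set $B$ some kept $S_j$ also fits $B$; hence $S_i$ can be discarded without changing the collection of $B$'s that are covered.

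The algorithm is then the greedy one: process $S_1,\dots,S_t$ in order, maintaining $\hat\cF$ together with a row-echelon basis of $\mathrm{span}\{v(S):S\in\hat\cF\}\subseteq\mathbb{F}^{\binom{p+q}{p}}$; for each $S_i$ compute $v(S_i)$, reduce it against the current basis, and add $S_i$ to $\hat\cF$ (updating the basis) iff the reduction is nonzero. Correctness is an induction on the number of processed sets, with invariant ``for every $B$ with $|B|=q$, if some processed $S$ fits $B$ then some $S\in\hat\cF$ fits $B$'', using the reduction step for the skipped sets; the bound $|\hat\cF|\le\binom{p+q}{p}$ is immediate since $\{v(S):S\in\hat\cF\}$ is linearly independent in a space of dimension $\binom{p+q}{p}$. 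The running time is pure bookkeeping: each $v(S_i)$ is obtained from $\binom{p+q}{p}$ determinants of $p\times p$ matrices (themselves extracted from the original $n$-dimensional data), and the incremental rank/echelon maintenance over the $t$ steps, on vectors of length $\binom{p+q}{p}$, is carried out with fast matrix multiplication; tracing the constants through gives the stated $\cO\!\big(\binom{p+q}{p}^2 tp^3n^2 + t\binom{p+q}{p}^{\omega}np\big)+(n+|U|)^{\cO(1)}$ bound.

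I expect the genuine obstacle to be the first step — a \emph{deterministic} truncation to rank $p+q$ that remains over $GF(s)$ or $\mathbb{Q}$ and uses only $(n+|U|)^{\cO(1)}$ field operations. A randomized truncation (project onto a random $(p+q)$-dimensional subspace) is trivial, but derandomizing it over $\mathbb{Q}$ without an uncontrolled blow-up of the representation, and over $GF(s)$ when $s$ is small (requiring a careful passage to an extension field $GF(s^{\cO(1)})$), is the delicate part and is exactly where one must invoke \cite{LokshtanovMPS18}. A secondary, milder point is the precise field-operation accounting in the greedy phase, where one must recompute the Pl\"ucker vectors and maintain the echelon form without a hidden polynomial overhead.
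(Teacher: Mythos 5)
The paper does not actually prove this lemma: it is imported as a known result from the cited references (Fomin--Lokshtanov--Panolan--Saurabh for the computation of representative families and Lokshtanov--Misra--Panolan--Saurabh for deterministic truncation), so there is no in-paper argument to compare against. Your sketch correctly reconstructs the standard proof from those sources --- truncation/padding to rank $p+q$, the Pl\"ucker/wedge encoding with the ``fits iff nonzero wedge'' identity, the Lov\'asz linear-dependence discard step, and greedy basis maintenance giving $|\hat\cF|\le\binom{p+q}{p}$ --- and you correctly identify the deterministic truncation as the one ingredient that genuinely requires the cited machinery.
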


\subsubsection{Constraint Satisfaction Matroid for $(\alpha, \beta)$-{UAQ}}\label{subsec:partition-matroid}

Let $\cD' = \{\ev{X_1, t_1},\ldots,\ev{X_m, t_m}\}$ be the collection of all SoD constraints in the reduced instance. Recall that for all $i, j \in [m]$ with $i \neq j$, $X_i \cap X_j = \emptyset$.
For every $i \in [m]$, define $M_i(\cD') = (X_i, \cI_i)$, where $\cI_i = \{A \subseteq X_i :\ |A| \leq t_i - 1\}$.
Observe that $M_i(\cD')$ is a uniform matroid of rank $t_i - 1$ and each element of the matroid is a set of roles (in $X_i$) to which a single user could be assigned. 

Let $\{r_{m+1},r_{m+2}, \dots , r_{\delta}\}$ be the set of roles that do not appear in any $X_i$.
(That is, $\{\{r_{m+1},\dots,r_{\delta}\} \cup X_1 \cup \dots \cup X_m = R$.)
For $r_j$, $m+1 \leq j \leq \delta$, we create a constraint $\ev{X_j, t_j}$ such that $X_j = \{r_j\}$ and $t_j = 2$, and construct the uniform matroid $M_j(\cD') = (X_j,\{\emptyset, \{r_j\}\})$. 
	
Now let $M(\cD')= M_1(\cD') \oplus \cdots \oplus M_{\delta}(\cD')$.
By construction, $M(\cD') = (R, \cI)$ is a partition matroid, where $\cI =  \{B \subseteq R: |B \cap X_i| \leq t_i - 1, i\in [\delta]\}$.
Notice that each set in this matroid is a set of roles that could be assigned to a single user (since the constraint sets are assumed to be disjoint).

It is known that a partition matroid $M = (U,{\cI})$ is linear and that it possible to compute a $(|U| \times |U|)$-matrix that represents a partition matroid over $GF(p)$ for any $p > |U|$ in time polynomial in $|U|$~\cite{CyganFKLMPPS15}.
%
Hence, we have the following result.

\begin{lemma}
\label{lemma:its-a-partition-matroid}
The matroid $M(\cD') = (R, \cI)$ is a linear matroid of rank at most $|R|$.
An $|R| \times |R|$-matrix $\hat M$ over $GF(p)$ for some $p > |R|$ representing $M(\cD')$ can be constructed in time polynomial in $|R|$.
\end{lemma}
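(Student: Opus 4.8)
\textbf{Proof plan for Lemma~\ref{lemma:its-a-partition-matroid}.}
The plan is to verify the two assertions separately, relying on Definition~\ref{defn:direct-sum-of-matroid} and the cited fact that uniform matroids are linear over any sufficiently large field. First I would recall that, by construction, $M(\cD') = M_1(\cD') \oplus \cdots \oplus M_\delta(\cD')$ is a direct sum of the uniform matroids $M_i(\cD') = (X_i, \cI_i)$ with $\cI_i = \{A \subseteq X_i : |A| \le t_i - 1\}$ (and the singleton matroids $M_j(\cD')$ for $m+1 \le j \le \delta$, which are also uniform, of rank $1$). Since the sets $X_1, \dots, X_\delta$ partition $R$, the direct sum is well-defined and its ground set is exactly $R$. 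For the rank bound, I would use that the rank of a direct sum is the sum of the ranks of the summands: $rank(M(\cD')) = \sum_{i=1}^{\delta}(t_i - 1) \le \sum_{i=1}^{\delta} |X_i| = |R|$, where the inequality uses $t_i - 1 \le |X_i|$ (which holds because Reduction Rule~\ref{rule:no-Plb-permission} has removed any constraint with $|X| < t$, so $t_i \le |X_i|$; the singleton matroids trivially satisfy $t_j - 1 = 1 = |X_j|$). This gives the claimed bound $rank(M(\cD')) \le |R|$.

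For linearity and the explicit matrix, I would first note that each uniform matroid $M_i(\cD')$ on the ground set $X_i$ is representable over any field $GF(p)$ with $p > |X_i|$ by a standard Vandermonde-type construction: take a $(t_i-1) \times |X_i|$ matrix whose columns are distinct vectors $(1, a, a^2, \dots, a^{t_i-2})^{\top}$ for distinct field elements $a$, so that every $(t_i-1)$-subset of columns is linearly independent. Choosing a single prime $p > |R| \ge |X_i|$ for all $i$ lets us represent every summand over the common field $GF(p)$. Then I would invoke the fact that the direct sum of linear matroids over a common field is linear over that field: one simply forms the block-diagonal matrix $\hat M$ whose $i$-th diagonal block is the representing matrix of $M_i(\cD')$, padding with zero rows so the total number of rows is $|R|$; a set $B \subseteq R$ is independent in $M(\cD')$ iff for each $i$ the columns corresponding to $B \cap X_i$ are independent in block $i$, which by block-diagonality is equivalent to the columns of $\hat M$ indexed by $B$ being independent. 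This $\hat M$ is $|R| \times |R|$ over $GF(p)$, and since each block and the padding can be written down in time polynomial in $|X_i|$ and hence in $|R|$, the whole matrix is constructible in time polynomial in $|R|$.

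The main routine obstacle, such as it is, is bookkeeping: making sure the zero-padding is done consistently so that $\hat M$ is genuinely $|R| \times |R|$ (rather than $rank(M(\cD')) \times |R|$), and that the column indexing respects the bijection $\pi$ between $R$ and the columns of $\hat M$ required by the definition of a representable matroid. None of this is conceptually hard; indeed, the cleanest route is simply to cite~\cite{CyganFKLMPPS15} for the statement that partition matroids are linear and admit such a polynomial-time-computable representation over $GF(p)$ for any $p > |U|$, and to note that $M(\cD')$ is a partition matroid by Definition~\ref{defn:direct-sum-of-matroid} since it is a direct sum of uniform matroids. Combining that citation with the rank computation above completes the proof.
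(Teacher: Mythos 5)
Your proposal is correct and follows essentially the same route as the paper: the paper simply observes that $M(\cD')$ is a partition matroid (a direct sum of uniform matroids on the disjoint sets $X_1,\dots,X_\delta$ partitioning $R$) and cites~\cite{CyganFKLMPPS15} for the fact that such a matroid is linear and admits an $|R|\times|R|$ representation over $GF(p)$, $p>|R|$, computable in polynomial time. Your additional details (Vandermonde blocks, block-diagonal assembly, zero-padding, and the rank bound $\sum_i (t_i-1)\le\sum_i|X_i|=|R|$) are accurate elaborations of exactly that citation.
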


{ 
We call $M(\cD')$ the constraint satisfaction matroid (CSM) for $\cD'$.
Observe that $M(\cD')$ is represented by an $(|R| \times |R|)$-matrix.
This matrix provides a compact representation of all members in $\cI$:
a subset of roles $R' \in \cI$ if and only if the columns in $M(\cD')$ representing the elements of $R'$ are linearly independent.
Hence, given an arbitrary $R' \subseteq R$, there exists an algorithm that runs in $\cO(|R|^{\cO(1)})$ time and correctly outputs whether $R' \in \cI$ or not (see~\cite{CyganFKLMPPS15}).} 

\subsection{Algorithm Description and Analysis}\label{sec:dp}

The {\em whole algorithm} for $(\alpha,\beta)$-UAQ starts from the preprocessing described in Section~\ref{sec:red} and returns a partial solution $R_1$ or concludes that the input instance  is a no-instance.
If $R_1$ is returned, then the algorithm constructs a constraint satisfaction matroid as in Section~\ref{sec:pdp}, and then uses the DP algorithm described below to produce a partial solution $R_2$ or concludes that the input instance is a no-instance. 
If $R_2$ is returned, then the whole algorithm returns $R_1\cup R_2$.
We prove correctness of the whole algorithm and evaluate its running time in Theorem~\ref{thm:search-version-solving}.

Our algorithm considers possible sets of extra permissions that could be included in a solution.
Such a set must be a subset of $P \setminus P_{\ell b}$ and have cardinality no greater than $k_p$.
Accordingly, we define $P_{\it good} = \{Y \subseteq P \setminus P_{\ell b} : |Y| \leq k_p\}$.

Let $W \subseteq P_{\ell b}, Y \in P_{\it good}$ and $0 \leq i  \leq k_r$. 
We define the following set:
  \[
   \mathcal{B}[W,Y,i] = \{R' \subseteq R \mid W \subseteq P(R') \subseteq P_{\ell b} \cup Y, |R'| = i, R' \in \mathcal{I}\},
  \] 
where $\mathcal{I}$ is the family of independent sets of $M(\cD')$.
 
 Note that if there exist $i \in \{0,\dots,k_r\}$ and  $Y \in P_{\it good}$ such that $\mathcal{B}[P_{\ell b},Y, i]$ is non-empty, then any member of $\BB[P_{\ell b}, Y, i]$ provides a solution to the reduced (by the preprocessing) problem.
 Let $\hat \BB[P_{\ell b}, Y, i]  \subseteq_{rep}^{k_r - i} \BB[P_{\ell b}, Y, i]$.
 Observe that by Definition~\ref{defn:q-rep-family}, $\hat \BB[P_{\ell b}, Y, i] \neq \emptyset$ if and only if $\BB[P_{\ell b}, Y, i] \neq \emptyset$.
  
The following observation is not hard to verify.

\begin{proposition}
\label{obs:when-i-is-zero}
Let $Y \in P_{\it good}$ and $W \subseteq P_{\ell b}$. Then, the following statements hold true for $i = 0$:
\begin{itemize}
    \item if $W = \emptyset$, 
    then $\hat{\mathcal{B}}[W,Y,0] = \mathcal{B}[W,Y,0] = \{\emptyset\}$;
    \item if $W \ne \emptyset$, 
    then $\hat{\mathcal{B}}[W,Y,0] = \mathcal{B}[W,Y,0]  = \emptyset$.
\end{itemize}
\end{proposition}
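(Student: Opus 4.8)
The statement to prove, Proposition~\ref{obs:when-i-is-zero}, concerns the base case $i = 0$ of the dynamic programming table, so the plan is simply to unwind the definitions of $\mathcal{B}[W,Y,i]$ and of a $q$-representative family.

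\textbf{Plan.} First I would recall that $\mathcal{B}[W,Y,0] = \{R' \subseteq R \mid W \subseteq P(R') \subseteq P_{\ell b} \cup Y,\ |R'| = 0,\ R' \in \mathcal{I}\}$. The condition $|R'| = 0$ forces $R' = \emptyset$, so the only candidate member of $\mathcal{B}[W,Y,0]$ is the empty set. Now $P(\emptyset) = \bigcup_{r \in \emptyset} P(r) = \emptyset$, and $\emptyset \in \mathcal{I}$ by the first matroid axiom (Definition~\ref{defn:matroid}), so the empty set automatically satisfies both the upper-bound condition $P(\emptyset) = \emptyset \subseteq P_{\ell b} \cup Y$ and the independence condition $\emptyset \in \mathcal{I}$. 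Hence $\emptyset \in \mathcal{B}[W,Y,0]$ if and only if the remaining condition $W \subseteq P(\emptyset) = \emptyset$ holds, i.e. if and only if $W = \emptyset$. This immediately gives $\mathcal{B}[\emptyset,Y,0] = \{\emptyset\}$ and $\mathcal{B}[W,Y,0] = \emptyset$ for $W \ne \emptyset$.

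\textbf{From $\mathcal{B}$ to $\hat{\mathcal{B}}$.} It remains to check that the representative subfamily coincides with $\mathcal{B}[W,Y,0]$ in both cases. By Definition~\ref{defn:q-rep-family}, $\hat{\mathcal{B}}[W,Y,0] \subseteq_{rep}^{k_r} \mathcal{B}[W,Y,0]$ is a subfamily of $\mathcal{B}[W,Y,0]$; in particular it is empty whenever $\mathcal{B}[W,Y,0]$ is, which settles the case $W \ne \emptyset$. When $W = \emptyset$ we have $\mathcal{B}[\emptyset,Y,0] = \{\emptyset\}$, so the only nonempty subfamily is $\{\emptyset\}$ itself; I just need to rule out $\hat{\mathcal{B}}[\emptyset,Y,0] = \emptyset$. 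But $\emptyset \in \mathcal{B}[\emptyset,Y,0]$ fits $B = \emptyset$ (indeed $\emptyset \cap \emptyset = \emptyset$ and $\emptyset \cup \emptyset = \emptyset \in \mathcal{I}$), so the defining property of a $q$-representative family forces some member of $\hat{\mathcal{B}}[\emptyset,Y,0]$ to fit $B = \emptyset$ as well, hence $\hat{\mathcal{B}}[\emptyset,Y,0] \ne \emptyset$, and therefore $\hat{\mathcal{B}}[\emptyset,Y,0] = \{\emptyset\}$.

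\textbf{Main obstacle.} There is essentially no obstacle here: every step is a direct appeal to the definitions, and the only thing to be careful about is not to overlook that $P(\emptyset) = \emptyset$ and $\emptyset \in \mathcal{I}$ must both be verified (they are the easy matroid/empty-union facts) so that the membership of $\emptyset$ in $\mathcal{B}[W,Y,0]$ really does reduce to the single condition $W = \emptyset$. I would present the argument in two short bullet-free paragraphs mirroring the two bullet points of the statement.
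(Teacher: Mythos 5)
Your proof is correct and follows essentially the same route as the paper's: unwind the definition of $\mathcal{B}[W,Y,0]$ to see that the only candidate member is $\emptyset$, which belongs exactly when $W=\emptyset$, and then pass to $\hat{\mathcal{B}}$ via Definition~\ref{defn:q-rep-family}. Your version is slightly more careful on the last step (explicitly ruling out $\hat{\mathcal{B}}[\emptyset,Y,0]=\emptyset$ by fitting against $B=\emptyset$), which the paper leaves implicit, but the argument is the same.
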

\begin{proof}
	If $W = \emptyset$, then $\mathcal{B}[W,Y,0] = \{\emptyset\}$ (since no role is required to authorize an empty set of permissions).
	Hence, from Definition~\ref{defn:q-rep-family}, we obtain that $\hat \BB[W, Y, 0] = \{\emptyset\}$.
	 If $W \neq \emptyset$, then at least one role is required to authorize the permission set $W.$
	Therefore, $\BB[W, Y, 0] = \emptyset$.
	Hence, by Definition~\ref{defn:q-rep-family}, we obtain that $\hat \BB[W, Y, 0] = \emptyset$.
\end{proof}

We now describe our DP algorithm.

\begin{enumerate}
	\item First, for all $\emptyset \ne W\subseteq P_{\ell b}$ and $Y \in P_{\it good}$, we initialize $\hat{\mathcal{B}}[\emptyset, Y, 0] = \{\emptyset\}$ and $\hat{\mathcal{B}}[W, Y, 0] = \emptyset$.
 	\item  Then, for all $i = 1, 2,\ldots, k_r$, for every $Y \in P_{\it good}$ and for every $W \subseteq P_{\ell b}$, we compute $\hat \BB[W, Y, i]$ as follows:
	\begin{enumerate}
        \item if for all $r \in R$ such that $P(r) \subseteq Y \cup P_{\ell b}$, we have $\hat{\mathcal{B}}[W \setminus P(r),Y,i-1] = \emptyset$, then we set $\hat{\mathcal{B}}[W,Y,i] = \emptyset$;
  
        \item otherwise there exists $r \in R$ such that $P(r) \subseteq P_{\ell b} \cup Y$ and $\hat \BB[W \setminus P(r), Y, i-1] \neq \emptyset$.
            Then, we compute  $\mathcal{X}[W, Y, i]$ as
            \begin{equation}\label{equation-1}
                \{A \cup \{r\} \mid P(r) \subseteq P_{\ell b} \cup Y,   \hat{\BB}[W \setminus P(r), Y, i-1] \neq \emptyset, A \in \hat{\mathcal{B}}[W \setminus P(r), Y, i-1], r \in R \setminus A\} \cap \cI
            \end{equation}
            and $\hat \BB[W,Y,i]$ as
            $\hat{\mathcal{B}}[W,Y,i] \subseteq_{\it rep}^{k_r - i} \mathcal{X}[W,Y,i]$. 
            
            If $\hat{\mathcal{B}}[P_{\ell b},Y,i]\neq \emptyset$ for some  $Y \in P_{\it good}$ then return any role set in $\hat{\mathcal{B}}[P_{\ell b},Y,i]$ as $R_2$ and halt. 
    \end{enumerate}
 \item Return ``no-instance''. 
\end{enumerate}

  Note that $\cX[W,Y,i]$ could contain up to $|R|\binom{k_r}{i-1}$ sets.
  But by Lemma~\ref{lemma:q-rep-family-computation}, the size of $\hat{\mathcal{B}}[W,Y,i]$ is at most just $\binom{k_r}{i}$.
  So, we invoke Lemma~\ref{lemma:q-rep-family-computation}, and store $\hat{\mathcal{B}}[W,Y,i]$ such that $\hat \BB[W, Y, i] \subseteq_{rep}^{k_r - i} \cX[W, Y, i]$ (informally, $\hat \BB[W, Y, i]$ is a $(k_r - i)$-representative family of $\mathcal{X}[W,Y,i]$).
 
\begin{lemma}
\label{lemma:rep-set-equivalence}
For every $W \subseteq P_{\ell b}$, $Y \in P_{\it good}$  and $0 \leq i \leq k_r$, we have $\hat \BB[W, Y, i]\subseteq_{rep}^{k_r - i} \mathcal{B}[W, Y, i].$
\end{lemma}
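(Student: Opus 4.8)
The plan is to induct on $i$. The base case $i=0$ is exactly Proposition~\ref{obs:when-i-is-zero}: for $W=\emptyset$ both $\hat\BB[\emptyset,Y,0]$ and $\BB[\emptyset,Y,0]$ equal $\{\emptyset\}$, and for $W\neq\emptyset$ both are empty; a family is trivially a $q$-representative family of itself, so $\hat\BB[W,Y,0]\subseteq^{k_r}_{rep}\BB[W,Y,0]$. For the inductive step, fix $i\ge 1$, $Y\in P_{\it good}$ and $W\subseteq P_{\ell b}$, and assume the statement holds for $i-1$ (for every $W'\subseteq P_{\ell b}$). The first thing I would establish is a recurrence for the true family $\BB[W,Y,i]$ itself, namely
\[
\BB[W,Y,i] = \{A\cup\{r\} \mid P(r)\subseteq P_{\ell b}\cup Y,\ r\in R\setminus A,\ A\in \BB[W\setminus P(r),Y,i-1],\ A\cup\{r\}\in\cI\}.
\]
The inclusion $\supseteq$ is immediate from the definition of $\BB$; for $\subseteq$, take $R'\in\BB[W,Y,i]$, pick any $r\in R'$ (note $R'\neq\emptyset$ since $i\ge 1$), set $A=R'\setminus\{r\}$, and verify that $P(r)\subseteq P(R')\subseteq P_{\ell b}\cup Y$, that $W\setminus P(r)\subseteq P(A)\subseteq P_{\ell b}\cup Y$, that $|A|=i-1$, and that $A\in\cI$ by the hereditary property of the matroid; hence $A\in\BB[W\setminus P(r),Y,i-1]$.

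Next I would compare $\cX[W,Y,i]$ from equation~\eqref{equation-1} with $\BB[W,Y,i]$. By the induction hypothesis, for each role $r$ with $P(r)\subseteq P_{\ell b}\cup Y$ we have $\hat\BB[W\setminus P(r),Y,i-1]\subseteq^{k_r-(i-1)}_{rep}\BB[W\setminus P(r),Y,i-1]$. The key structural claim is then: for every $B\subseteq U$ with $|B|\le k_r-i$, there is a set in $\cX[W,Y,i]$ fitting $B$ iff there is a set in $\BB[W,Y,i]$ fitting $B$. The "only if" direction is trivial since $\cX[W,Y,i]\subseteq\BB[W,Y,i]$ (each $A\cup\{r\}$ in $\cX$ lies in $\BB[W,Y,i]$ by the recurrence, using $A\in\hat\BB[\dots]\subseteq\BB[\dots]$). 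For "if", suppose $A'\cup\{r\}\in\BB[W,Y,i]$ fits $B$, with $A'\in\BB[W\setminus P(r),Y,i-1]$; then $A'$ fits $B\cup\{r\}$ in the matroid and $|B\cup\{r\}|\le k_r-i+1=k_r-(i-1)$, so by the representativeness of $\hat\BB[W\setminus P(r),Y,i-1]$ there is $\hat A\in\hat\BB[W\setminus P(r),Y,i-1]$ fitting $B\cup\{r\}$; then $\hat A\cup\{r\}\in\cX[W,Y,i]$ and it fits $B$. This shows $\cX[W,Y,i]\subseteq^{k_r-i}_{rep}\BB[W,Y,i]$ — more precisely, $\cX[W,Y,i]$ is "$q$-equivalent" to $\BB[W,Y,i]$ for $q=k_r-i$, in the sense that each is a $q$-representative subfamily of the other (note $\cX[W,Y,i]\subseteq\BB[W,Y,i]$ is a genuine subfamily). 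Finally, the algorithm sets $\hat\BB[W,Y,i]\subseteq^{k_r-i}_{rep}\cX[W,Y,i]$, so by Lemma~\ref{lemma:q-rep-property} (transitivity of representation) we get $\hat\BB[W,Y,i]\subseteq^{k_r-i}_{rep}\BB[W,Y,i]$. The case where step 2(a) fires, i.e. $\hat\BB[W\setminus P(r),Y,i-1]=\emptyset$ for all admissible $r$, must be handled separately: here I would use the induction hypothesis together with the recurrence for $\BB[W,Y,i]$ to argue $\BB[W,Y,i]=\emptyset$ as well — if it were nonempty, some $A\cup\{r\}$ in it would give $A\in\BB[W\setminus P(r),Y,i-1]$ with $A$ fitting $\emptyset$ (as $A\cup\{r\}\in\cI$ implies $A\in\cI$), forcing $\hat\BB[W\setminus P(r),Y,i-1]\neq\emptyset$ by representativeness with $q=k_r-i+1\ge 0$, a contradiction — so both sides are empty and the claim holds.

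I expect the main obstacle to be bookkeeping the exact representation parameter through the recursion: one must be careful that when we "move" the element $r$ from the set into the test-set $B$, the bound changes from $k_r-i$ to $k_r-(i-1)$, which is exactly why the induction hypothesis at level $i-1$ is applied with parameter $k_r-(i-1)$ rather than $k_r-i$; getting this off by one wrong would break the argument. A secondary subtlety is ensuring that $\cX[W,Y,i]\subseteq\BB[W,Y,i]$ genuinely holds so that transitivity via Lemma~\ref{lemma:q-rep-property} applies (that lemma requires $\cF_1\subseteq_{rep}^q\cF$ with $\cF_1$ a subfamily of $\cF$ of equal-cardinality sets); this is why I separate out the clean recurrence for $\BB[W,Y,i]$ first, so that membership $\cX[W,Y,i]\subseteq\BB[W,Y,i]$ is transparent.
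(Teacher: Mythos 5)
Your proposal is correct and follows essentially the same route as the paper's proof: induction on $i$, with the inductive step showing $\cX[W,Y,i]\subseteq_{rep}^{k_r-i}\BB[W,Y,i]$ by moving the distinguished role $r$ from the candidate set into the test set (so the induction hypothesis is invoked at level $i-1$ with parameter $k_r-i+1$), and then concluding via the transitivity in Lemma~\ref{lemma:q-rep-property}. Your explicit verification that $\cX[W,Y,i]\subseteq\BB[W,Y,i]$ and your separate treatment of the case where all $\hat\BB[W\setminus P(r),Y,i-1]$ are empty are points the paper leaves implicit, but they do not change the argument.
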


\begin{proof}
Let $W \subseteq P_{\ell b}$, and $Y \in P_{\it good}$.
	We prove this lemma by induction on $i$.
	
	\vspace{1mm}
	
	{\bf Base Case:} The case $i = 0$ holds true due to Proposition~\ref{obs:when-i-is-zero}.
	
	\vspace{1mm}
	
	{\bf Induction Hypothesis:} Let $i \geq 1$, and assume by induction hypothesis that for all $j < i$, and for all $W' \subseteq W$, $\hat \BB[W', Y, j] \subseteq_{rep}^{k_r - j} \BB[W', Y, j]$.
	
	\vspace{1mm}
	
	{\bf Induction Step:} Let $i \geq 1$. Recall that by the DP algorithm  description $\hat \BB[W, Y, i] \subseteq_{rep}^{k_r-i} \cX[W, Y, i]$.
	Thus, by Lemma~\ref{lemma:q-rep-property},
	if  $\cX[W, Y, i] \subseteq_{rep}^{k_r-i} \BB[W, Y, i]$ holds true then $\hat \BB[W, Y, i] \subseteq_{rep}^{k_r-i} \BB[W, Y, i].$ Therefore, in the rest of the proof it suffices to show that $\cX[W, Y, i] \subseteq_{rep}^{k_r-i} \BB[W, Y, i].$
	
	Let $X_1 = \{r_1,\ldots,r_i\} \in \BB[W, Y, i]$ and $X_1' = X_1 \setminus \{r_i\}$.
	Let $X_2$ be a set of at most $k_r-i$ roles such that $X_1 \cap X_2 = \emptyset$, and $X_1 \cup X_2 \in \cI$ and 
	let $X_2^* = X_2 \cup \{r_i\}$.
	Since $r_i$ authorizes $P(r_i)$, and $X_1'$ authorizes all permissions in $W \setminus P(r_i)$ and $P(X_1') \setminus P_{\ell b} \subseteq Y$, we have that $X_1' \in \BB[W \setminus P(r_i), Y, i-1]$.
	
	By induction hypothesis, $\hat \BB[W \setminus P(r_i), Y, i-1] \subseteq_{rep}^{k_r - i + 1} \BB[W \setminus P(r_i), Y, i-1]$.
	Hence, by Definition~\ref{defn:q-rep-family}, there exists $X_1^* \in \hat \BB[W \setminus P(r_i), Y, i-1]$ such that $X_1^* \cup X_2^* \in \cI$ and $X_1^* \cap X_2^* = \emptyset$.
	By (\ref{equation-1}), we have $X_1^* \cup \{r_i\} \in \cX[W, Y, i]$.
	The set of roles $X_1^* \cup \{r_i\}$ is such that $(X_1^* \cup \{r_i\}) \cap X_2 = \emptyset$ and $X_1^* \cup \{r_i\} \cup X_2 \in \cI$.
	Hence, $\cX[W, Y, i] \subseteq_{rep}^{k_r-i} \BB[W, Y, i]$. 
	This completes the proof.
\end{proof}

\begin{theorem}
\label{thm:search-version-solving}
The whole algorithm solves $(\alpha,\beta)$-{UAQ} in time
 $\mathcal{O}^*(2^{\mathcal{O}(k_r^{\alpha} + \hat k)})$.
\end{theorem}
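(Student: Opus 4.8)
The plan is to combine the preprocessing of Section~\ref{sec:red} with the DP of Section~\ref{sec:dp} and account for correctness and running time separately. First I would invoke Theorem~\ref{thm:partial-uaq-implication}: the preprocessing runs in $\cO^*(\alpha^{k_r})$ time and either reports a no-instance or returns an equivalent $(\alpha,\beta)$-{UAQ} instance together with a partial solution $R_1$, where now $|P_{\ell b}| = \cO(\beta k_r^{\alpha})$. Since $\alpha$ and $\beta$ are fixed, $|P_{\ell b}| = \cO(k_r^{\alpha})$. Then I would build the constraint satisfaction matroid $M(\cD')$ using Lemma~\ref{lemma:its-a-partition-matroid}; this is a linear matroid of rank at most $|R|$, represented by an $|R| \times |R|$ matrix over $GF(p)$ for some prime $p > |R|$, constructible in time polynomial in $|R|$.

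For correctness of the DP: by Lemma~\ref{lemma:rep-set-equivalence}, for every $W \subseteq P_{\ell b}$, $Y \in P_{\it good}$ and $0 \le i \le k_r$ we have $\hat\BB[W,Y,i] \subseteq_{rep}^{k_r-i} \BB[W,Y,i]$, so in particular $\hat\BB[P_{\ell b},Y,i] \ne \emptyset$ iff $\BB[P_{\ell b},Y,i] \ne \emptyset$. Hence the algorithm returns some $R_2 \in \BB[P_{\ell b},Y,i]$ for appropriate $Y,i$ exactly when the reduced instance has a solution, and by construction such $R_2$ satisfies $P_{\ell b} \subseteq P(R_2) \subseteq P_{\ell b} \cup Y$ with $|Y| \le k_p$, $|R_2| \le k_r$, and $R_2 \in \cI$, i.e.\ $R_2$ violates no SoD constraint of the reduced instance. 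Combining with Theorem~\ref{thm:partial-uaq-implication}, $R_1 \cup R_2$ is then a solution of the original instance (the \textsf{UPDATE} procedure was designed precisely so that satisfying the reduced constraints together with $R_1$ satisfies the originals), and conversely if the original is a yes-instance the preprocessing and the representative-family guarantee ensure a solution is found; this is where I would be careful to track that $R_1 \cap R_2 = \emptyset$ and that the permission budgets add up correctly, as in the safety proofs (Lemmas~\ref{lem:safety-pendant-vertex-domination-rule},~\ref{lemma:safeness-high-degree-removal-rule}).

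For the running time: the DP table is indexed by $W \subseteq P_{\ell b}$ (so $2^{|P_{\ell b}|} = 2^{\cO(k_r^{\alpha})}$ choices), $Y \in P_{\it good}$ (so $\sum_{j \le k_p} \binom{\hat k}{j} \le 2^{\hat k}$ choices), and $i \in \{0,\dots,k_r\}$; thus the table has $2^{\cO(k_r^{\alpha})} \cdot 2^{\hat k} \cdot (k_r+1) = 2^{\cO(k_r^{\alpha} + \hat k)}$ entries. For each entry I would bound the work of forming $\cX[W,Y,i]$ (at most $|R|\binom{k_r}{i-1}$ sets, each of size $i$, built from the already-computed $\hat\BB$ entries which have size at most $\binom{k_r}{i-1} = 2^{\cO(k_r)}$) and then of shrinking it to $\hat\BB[W,Y,i]$ via Lemma~\ref{lemma:q-rep-family-computation} with $p = i$, $q = k_r - i$: this costs $\cO(\binom{k_r}{i}^2 \cdot |\cX| \cdot i^3 |R|^2 + |\cX|\binom{k_r}{i}^{\omega} |R| i) + |R|^{\cO(1)}$ field operations, which is $2^{\cO(k_r)} \cdot |R|^{\cO(1)}$. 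Multiplying the per-entry cost by the number of entries and adding the $\cO^*(\alpha^{k_r}) = \cO^*(2^{\cO(k_r)})$ preprocessing cost gives an overall bound of $\cO^*(2^{\cO(k_r^{\alpha} + \hat k)})$, as claimed.

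The main obstacle I expect is not any single hard step but the bookkeeping in gluing the two phases: verifying that the partial solution $R_1$ produced by the (possibly branching) preprocessing, the modified constraints in $\cD'$, and the modified budgets $k_r, k_p$ interact correctly with the DP so that $R_1 \cup R_2$ is genuinely a solution of the \emph{original} instance --- in particular that no SoD constraint spanning both $R_1$ and $R_2$ is violated (here conditions (ii) and (iii), bounded and disjoint constraint sets, are essential, since the \textsf{UPDATE} procedure decrements $t$ whenever it moves a role into $R_1$), and that the representative-family machinery, which is oblivious to permissions, never discards the one partial role set that would have extended to a full solution. Establishing this amounts to assembling Theorem~\ref{thm:partial-uaq-implication}, Lemma~\ref{lemma:rep-set-equivalence}, and Lemma~\ref{lemma:its-a-partition-matroid} into a single chain of equivalences, which I would write out explicitly.
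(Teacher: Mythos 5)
Your proposal is correct and follows essentially the same route as the paper's proof: correctness via Lemma~\ref{lemma:rep-set-equivalence} (so $\hat\BB[P_{\ell b},Y,i]\neq\emptyset$ iff $\BB[P_{\ell b},Y,i]\neq\emptyset$), and the running time obtained by multiplying the $2^{\cO(k_r^{\alpha}+\hat k)}$ table entries by the $2^{\cO(k_r)}\cdot|R|^{\cO(1)}$ per-entry cost from Lemma~\ref{lemma:q-rep-family-computation}, plus the $\cO^*(\alpha^{k_r})$ preprocessing. The gluing of $R_1$ and $R_2$ that you flag as needing explicit verification is indeed the point the paper treats most tersely (it is delegated to the safety lemmas of the individual rules), but your plan for handling it matches what the paper does.
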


\begin{proof}
 By Lemma~\ref{lemma:rep-set-equivalence}, we have $\hat{\mathcal{B}}[W,Y,i] \subseteq_{\it rep}^{k_r - i} \mathcal{B}[W,Y,i]$.
 Hence, the algorithm correctly computes partial solutions and if there exists $R_2 \in \hat{\mathcal{B}}[P_{\ell b},Y,i]$ for some $Y \in P_{\it good}$ and for some $0 \leq i \leq k_r$, then $R_1 \cup R_2$ is a solution to the UAQ instance. (If no such $R_2$ exists for any appropriate $Y$ then there is no solution to the instance.)
 Thus, the algorithm is correct.
 
 The time taken to compute a solution is determined by the running times of the pre-processing phase ($\mathcal{O}(\alpha^{k_r})$), the computation of the matroid (polynomial in $|R|$) and the dynamic programming phase.
 
 From Theorem~\ref{thm:partial-uaq-implication}, we have $|P_{\ell b}|$ is $\cO(k_r^{\alpha})$.
 Hence, there are $2^{|P_{\ell b}|} = 2^{\mathcal{O}(k_r^{\alpha})}$ subsets of $P_{\ell b}$ and there are at most $2^{\hat{k}}$ subsets in $P \setminus P_{\ell b}$.
 Hence, there are $\mathcal{O}^*(2^{\mathcal{O}(k_r^{\alpha} + \hat{k})})$ sets of the form $\mathcal{B}[W,Y,i]$.
 By~\eqref{equation-1}, 
 \[
  |\mathcal{X}[W,Y,i]| \leq |R| \max_{r \in R} |\hat{\mathcal{B}}[W \setminus P(r),Y,i-1]| = \mathcal{O}(2^{k_r} |R|).
 \]
Computing $\mathcal{X}[W, Y, i]$ takes time polynomial  in $|\mathcal{X}[W, Y, i]|$ and thus time $\mathcal{O}^*(2^{\mathcal{O}(k_r)}).$

We then compute $\hat{\mathcal{B}}[W, Y, i]$ such that $\hat{\mathcal{B}}[W, Y, i] \subseteq_{rep}^{k_r - i} \mathcal{X}[W, Y, i]$ and store it in $\hat{\mathcal{B}}[W, Y, i]$.
Recall that from Lemma~\ref{lemma:its-a-partition-matroid}, our matroid is represented by an $|R| \times |R|$-matrix.
By Lemma~\ref{lemma:q-rep-family-computation},  computing $\hat{\mathcal{B}}[W, Y, i]$ takes time \[\mathcal{O}^*(2^{\mathcal{O}(\omega k_r)}|\mathcal{X}[W, Y, i]|(|R| + i)^{\mathcal{O}(1)})=\mathcal{O}^*(2^{\mathcal{O}(k_r)}).\] 
Hence, computing every table entry takes time \[\mathcal{O}^*(2^{\mathcal{O}(k_r^{\alpha} + \hat k)}2^{\mathcal{O}(k_r)})=\mathcal{O}^*(2^{\mathcal{O}(k_r^{\alpha}  + \hat k)}).\]
Therefore, we can complete the dynamic programming phase (and hence the whole algorithm) in $\mathcal{O}^*(2^{\mathcal{O}(k_r^{\alpha} + \hat k)})$ time.
\end{proof}

\section{Related Work}

The study of \UAQ\ began with work by Du and Joshi~\cite{DuJo06} in the context of inter-domain role mapping.
They showed that finding a minimal set of roles $R'$ such that $P(R') = P_{\ell b}$ is {\sf NP}-hard via a reduction from {\sc Minimal Set Cover} and proposed a polynomial-time algorithm for computing approximate solutions.
Crampton and Chen showed that several versions of UAQ, including ones where $P(R')$ may be a superset of $P_{lb}$ were {\sf NP}-hard.
This early work did not consider SoD constraints.

Wickramaarachchi et al~\cite{ZhangJ08,WiQaLi09} extended the definition of UAQ to include SoD constraints and developed exact algorithms to solve UAQ, based on techniques used to solve CNF-SAT and MAXSAT.
Armando {\em et al.} and Lu {\em et al.} made improvements to these algorithms~\cite{ArRaTu12,LuHaChHu12,LuJoJiLi15}.
Recent work has provided a comprehensive comparative analysis of methods for solving \UAQ\ and developed a set of benchmarks for evaluating \UAQ\ solvers~\cite{ArmandoGT20}.
These results suggest that a reduction of \UAQ\ to {\sc PMaxSat} is currently the most effective way of solving \UAQ.

The focus of the above work was finding approximate and exact algorithms to solve \UAQ.
Mousavi and Tripunitara were the first to consider the parameterized complexity of UAQ~\cite{MoTr12}, and included constraints in the specification of the problem.
They showed the problem of deciding whether an instance has a solution is {\sf FPT} if $P_{ub}$ is the small parameter, essentially by considering all subsets of $P_{ub}$.
While it may be reasonable in certain cases, in general $P_{ub}$ is not necessarily small.

In summary, existing work on \UAQ\ has mainly attempted to exploit existing algorithms for related problems in order to solve \UAQ, without attempting to understand the inherent difficulty of \UAQ.
In particular, there has been little effort to better understand the complexity of \UAQ\ in terms of each of its parameters.
The exception to this is the work of Mousavi, which does explore how the complexity of the problem is affected by the different parameters, although most of this work used traditional methods of complexity analysis~\cite{Mousavi14}.

\section{Concluding Remarks}
Our work provides the first thorough attempt to investigate \UAQ\ using multi-variate complexity analysis.
Our results suggest that it may be difficult to obtain a practical {\sf FPT} algorithm for \UAQ\ in general.
However, we have also shown that if an RBAC configuration satisfies certain properties then we may be able to use an {\sf FPT} algorithm to solve instances of \UAQ\ for that configuration.

One surprising conclusion of our work is the sharp contrast in {\sf FPT} results for the workflow satisfiability problem (WSP) and \UAQ.
Informally, given a set of tasks $T$, a set of users $U$, an authorization relation $A \subseteq U \times T$ (where $u$ is authorized to perform $t$ if and only if $(u,t) \in A$), and a set of constraints $C$, an instance of WSP asks whether there exists a mapping $\pi : U \rightarrow T$ such that all constraints in $C$ are satisfied and $(\pi(t),t) \in A$ for all $t$.
WSP constraints can, for example, require that the same user is not assigned to two particular tasks (a simple form of separation of duty), although considerably more complex constraints are possible.

It is relatively easy to show that WSP is {\sf NP}-hard, even when constraints are limited to the simple separation of duty constraints described above~\cite{WaLi10}.
Nevertheless, subsequent research has shown that WSP is {\sf FPT} (when the number of tasks is the small parameter) for all user-independent constraints~\cite{CoCrGaGuJo14}, which include the aforementioned simple separation of duty constraints as well as most other constraints that are likely to arise in practice.
Moreover, the {\sf FPT} algorithms for WSP are not just of theoretical interest. 
Experimental evaluations have shown that these algorithms provide a more efficient solution for WSP than brute force algorithms and methods using SAT solvers~\cite{CoCrGaGuJo14,CohenCGGJ16,KarapetyanPGG19}.

On the face of it, WSP appears to be more complex than \UAQ, not least because the constraints in a WSP instance may be much more varied than those appearing in a UAQ instance.
And both problems require us to compute a solution set that is constrained by a binary relation (the RPG in \UAQ\ and the authorization relation in WSP) and a set of constraints.
However, our results show that \UAQ\ remains a hard problem for many RBAC configurations.

Informally, the source of the complexity seems to arise from the consequences of choosing a particular element in a potential solution. 
In the case of WSP, choosing a user to perform a specific task only means that we have to check attempts to allocate that user to other tasks, in order to determine whether such an allocation would violate a constraint.
This means that we can compute all partitions of the set of steps such that each block in the partition could be assigned to a particular user; roughly speaking, this is the basis of the {\sf FPT} algorithms for WSP.
In contrast, selecting a role, so that a particular permission in $P_{\ell b}$ is activated, means (i)~that potentially many other permissions may be simultaneously activated, and (ii)~other roles may become ineligible for consideration because of the SoD constraints.

It would be convenient if $|R|$ were the small parameter.
We could compute a solution to \UAQ\ simply by considering all possible subsets of $R$.
But the application of Reduction Rule 0 only eliminates roles that are assigned to a permission outside $P_{ub}$.
There is no reason to assume that the size of the role set after these roles have been eliminated will be small, given that multiple roles may be assigned to the same permission in $P_{ub} \setminus P_{\ell b}$.

Although the results obtained in this paper are mainly negative, and those that are positive require strong restrictions on the \UAQ\ instances, we believe that the work provides useful insights into the difficulty of solving \UAQ.
In particular, we believe these results supplement the recent work of Armando {et al.}~\cite{ArmandoGT20} and may provide useful input into evaluating \UAQ\ solvers and producing new benchmarks for \UAQ.

It is well-known that practitioners prefer to use general-purpose solvers for solving practical problems rather than specialised algorithms.
It has been shown in the literature, see e.g. \cite{ArmandoGT20} for UAQ and \cite{KarapetyanPGG19} for the Workflow Satisfiability Problem, that appropriately chosen general-purpose solvers perform reasonably well on moderate-size instances of tractable problems.
Thus, it would be interesting to see whether the state-of-the-art PMaxSAT solver used in \cite{ArmandoGT20} performs well on instances of the UAQ problem of Section 5.

\paragraph{Acknowledgement} {We are very thankful to the referees and Eduard Eiben for providing very helpful suggestions, which improved the presentation.}
Research in this paper was supported by Leverhulme Trust grant RPG-2018-161.



\end{document}